\newtheorem{theorem}{Theorem}[section]
\newtheorem{proposition}[theorem]{Proposition}
\newtheorem{definition}[theorem]{Definition}
\newtheorem{example}[theorem]{Example}
\newtheorem{remark}[theorem]{Remark}
\numberwithin{equation}{section}
\newcommand{\R}{{\mathbb{R}}}
\newcommand{\N}{{\mathbb{N}}}
\newcommand{\mc}{\mathcal}
\newcommand{\mbf}{\mathbf}
\newcommand{\tp}{\texttt{p}}
\newcommand{\cB}{\mathcal{B}}
\newcommand{\cS}{\mathcal{S}}
\newcommand{\cQ}{\mathcal{Q}}
\newcommand{\mcF}{\mathcal{F}}
\newcommand{\mcFi}{\mathcal{F}^{-1}}
\newcommand{\mcS}{\mathcal{S}}
\newcommand{\mcAS}{\mathcal{AS}}
\newcommand{\mcSFi}[1][F]{{\mathcal{S},\mathcal{#1}^{-1}}}
\newcommand{\mcASFi}[1][F]{{\mathcal{AS},\mathcal{#1}^{-1}}}
\newcommand\AddAppendices{
  \appendix
  \seq_map_inline:Nn \g_appendices_seq {##1}
}
\tikzset{commutative diagrams/.cd,
mysymbol/.style={start anchor=center,end anchor=center,draw=none}
}
\title{A Contract Theory for Layered Control Architectures}
\author[$\dagger$]{Manuel Mazo Jr.}
\author[$\star$]{Will Compton} 
\author[$\star$]{Max H. Cohen}
\author[$\star$]{Aaron D. Ames}
\affil[$\dagger$]{Delft Center for Systems and Control, TU Delft (The Netherlands)}
\affil[$\star$]{Department of Mechanical Engineering, Caltech (USA)}
\begin{document}

\maketitle

\begin{abstract}
Autonomous systems typically leverage layered control architectures with a combination of discrete and continuous models operating at different timescales.  As a result, layered systems form a new class of hybrid systems composed of systems operating on a diverse set of continuous and discrete signals.  
This paper formalizes the notion of a layered (hierarchical) control architecture through a theory of relations between its layers.
This theory enables us to formulate contracts within layered control systems---these define interfaces between layers and isolate the design of each layer, guaranteeing that composition of contracts at each layer results in a contract capturing the desired system-wide specification.  Thus, the proposed theory yields the ability to analyze layered control architectures via a compositional approach. 
\end{abstract}

\section{Introduction}

Modern control systems are characterized by their increasing complexity, due not only to the systems being controlled, but also due to the intricacy of the specifications they must satisfy. To address this challenge, engineers often rely on the divide-and-conquer tenet. Complex designs are divided in smaller subsystem designs which, when properly combined, results in the desired functionality for the overall system.
Contract theory~\cite{Benveniste} provides a framework to reason about such compositional designs. 
Although contracts have shown power to reason about the composition of control systems operating at the same abstraction level, what has been called \textit{horizontal} compositions, see e.g.,~\cite{Alkhatib, Saoud, Sharf, Kim, Shali}, 
the tools available to reason about composition of systems of heterogeneous nature is more limited. To address this heterogeneity, typical of any Cyber-Physical System, and particularly of layered control architectures, the notion of \textit{vertical} contracts is proposed in~\cite{Nuzzo}. The main idea behind vertical contracts is to employ \textit{heterogeneous refinements}, essentially employing functions that provide a semantic mapping between different modelling domains (e.g., continuous/discrete time/space domains). 

Layered, also called hierarchical, control architectures are one example of a system resulting from heterogeneous compositions. In a layered control architecture, different controllers are designed at different levels of abstraction, which, when composed (nested) together, results in a controller for the overall objective. Such architectures are encountered in many complex robotic systems \cite{Ames-Multirate}: at the lowest layer, control happens on a fast time scale with designs employing ODE models and aims at rendering the dynamics more tractable at the middle layer; at the middle layer fast dynamics are ignored, usually discrete time models are employed, enabled by the lower layer fast control, and the goal is to plan trajectories connecting various regions of interest; at the top layer the goal is to plan which regions the robot must visit and in what order, where strategies are designed using finite state machine (FSM) models. 

In this manuscript, we formalize vertical contracts for layered control architectures. Our approach follows, at a conceptual level, the same idea in~\cite{Nuzzo}, introducing first mappings between modelling domains, which we term \textit{transducers}, by analogy with physical devices. Next, we introduce a Mealy-type version of generalized labeled transition systems (GLTS), and a concept of heterogeneous (alternating) simulation relations that, by means of the introduced transducers, allow for comparing models at different levels of abstraction, e.g. compare ODEs with difference equations or FSMs. After establishing some basic properties of these relations, we introduce a version of system composition that captures both interconnections of systems akin to feedforward and feedback composition of plants and controllers with external reference signals. Finally, we show employing the introduced heterogeneous relations how to transfer properties of controller designs on a higher level of abstraction to their application on the systems modelled at a lower level of abstraction. Ultimately, the developed tools allow to precisely formalize vertical contracts for layered controllers. We introduce a contract template for each layer, so that designers can devise controllers for each layer, on models at the corresponding level of abstraction without worrying about the specifics of designs at the other layers. We employ assume-guarantee contracts, in which GLTS combined with transducers establish a form of interface between layers. If each control layer satisfies its contract, then the composition of all of these different controllers, essentially nesting them, results in the desired system-wide properties. 
We illustrate the ideas presented on a classical robotic problem specified at the highest level of abstraction through a temporal logic specification.

\section{Preliminaries}

\textbf{Notation:} We employ $\mc{L}^n_\infty$ to denote the set of bounded signals from time to Euclidean space $x:\R^+\to\R^n$, $\mc{D}^n_\infty$ for bounded discrete time signals $x_d:\N\to\R^n$, and $\Sigma^\omega$ for the set of sequences of infinite length over some alphabet $\Sigma$ of finite cardinality, and $\Sigma^\aleph$ for the space of continuous time signals with image on $\Sigma$, i.e. $x\in\Sigma^\aleph$, $x:\R^+\to\Sigma$. Correspondingly we use $2^X$ for the power set of $X$. We alleviate notation by simply writing $x,y\in\mc{L}_\infty$ to denote that both signals belong to their corresponding $\mc{L}^n_\infty$, with possibly different $n$ for $x$ and $y$.  We denote the identity function by $\mbf{id}$. Given a relation $R\subseteq X\times Y$, we employ the shorthand notation $xRy$ to denote $(x,y)\in R$. The projection map on the $j$-th entry is denoted $\pi_j$, e.g., $\pi_2:(x,y)\mapsto y$. We sometimes abuse notation and denote by $\mcF(A)$ for $\mcF:X\to Y$, $A\in 2^X$ to denote $\mcF(A)=\{y| y=\mcF(x), x\in A\}.$ Finally, for a differential equation: $\dot{\xi}_{u,x}(t)=f(\xi_{u,x}(t),u(t)), \xi(0)=x$, we denote by $\xi_{u,x}(0:\tau)$ the solution to the differential equation, with initial condition $x$ and input $u$, restricted to the time segment $0$ to $\tau$.

\subsection{Problem Statement}

We aim to provide a formal methodology that breaks down a complex synthesis problem into simpler synthesis problems operating at different abstraction layers, such that the composition of these sub-solutions results in satisfaction of the original specification. Moreover, the problem division must enable the independent synthesis of controllers for each sub-problem. To this end, we establish a set of contracts delineating the boundary of operation for each control layer, which decouples the problems addressed at each layer.

We consider systems modelled as complex non-linear ODEs:
\begin{eqnarray*}
    \dot{x}&=&f(x,u)\\
    y&=&h(x),
\end{eqnarray*}
with $x\in\mathcal{X}\subseteq \R^n$, $u\in\mathcal{U}\subseteq \R^m$, and $y\in\mathcal{Y}\subseteq \R^p$.
The goal is to synthesize a controller such that the resulting controlled system satisfies a complex specification provided as, e.g., a temporal logic formula $\phi$ involving atomic propositions defined over the state-space of the system. We illustrate each of the steps, contracts, and controller designs over the following running example.

\begin{example} \label{ex:diffdrive}
 Consider a vehicle  with dynamics:
\begin{equation} \label{eq:diffdrive}
    \begin{bmatrix} \dot{x} \\ \dot{y} \\ \dot{\theta} \\ \dot{v} \\ \dot{\omega} \end{bmatrix} = \begin{bmatrix}
        v \cos \theta \\ v \sin \theta \\ \omega \\ a \\ \alpha
    \end{bmatrix}, \quad \begin{bmatrix}
        a \\ \alpha
    \end{bmatrix} = \begin{bmatrix}
        \frac{m_c d}{I_1} \omega^2 \\ - \frac{m_c d}{I_2} v \omega
    \end{bmatrix}+ \begin{bmatrix}
        \gamma & \gamma \\ \delta & -\delta
    \end{bmatrix} \begin{bmatrix}
        \tau_R \\ \tau_L
    \end{bmatrix}
\end{equation}
in which: $x, y$ denote the Cartesian position, $\theta$ the heading, $v, \omega$ the linear and angular velocity, $a, \alpha$ the linear and angular acceleration, and $\tau_L, \tau_R$ are the left and right motor torques.
Additionally, $m_c, d, I_1, I_2, \gamma, \delta$ are physical constants of the vehicle. Our ultimate objective is to design a controller for \eqref{eq:diffdrive} such that the resulting trajectory satisfies the linear temporal logic (LTL) formula:
\begin{equation}\label{eq:LTL-example}
\begin{aligned}
    \phi = & \square \Diamond \mathrm{base} \wedge \square (\mathrm{base} \implies \bigcirc\neg\mathrm{base} \mathbf{U} \mathrm{gather}) \\
    & \wedge \square (\mathrm{base} \implies \bigcirc\neg\mathrm{base} \mathbf{U} \mathrm{recharge}) \wedge \square\neg\mathrm{danger},
\end{aligned}
\end{equation}
defined over a set of atomic propositions labeling various regions of interest in the environment (Fig. \ref{fig:example}). Designing an end-to-end controller to choose torques to directly satisfy the LTL specification is intractably complex; this motivates the use of a hierarchical control decomposition.
\end{example}

\subsection{Signals and transducers}

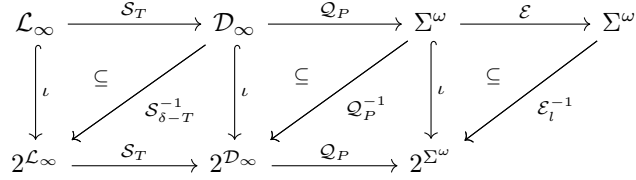
\begin{figure}
    \centering

\begin{tikzcd}[column sep=huge,row sep=huge]
   \mc{L}_\infty\arrow{r}{\mcS_T}\arrow[hookrightarrow]{d}{\iota} 
  & \mc{D}_\infty\arrow{r}{\mc{Q}_P}\arrow[hookrightarrow]{d}{\iota} \arrow{dl}{\mcS_{\delta-T}^{-1}}
  \arrow[swap]{dl}{\subseteq\phantom{aa}}
  & \Sigma^\omega \arrow{dl}{\mc{Q}_P^{-1}}
  \arrow[swap]{dl}{\subseteq\phantom{aa}}
  \arrow[hookrightarrow]{d}{\iota}\arrow{r}{\mc{E}} 
  & \Sigma^\omega \arrow{dl}{\mc{E}_{l}^{-1}}
  \arrow[swap]{dl}{\subseteq\phantom{aa}}
  \\
   2^{\mc{L}_\infty}\arrow{r}[]{\mcS_T} 
  & 2^{\mc{D}_\infty}\arrow{r}[]{\mc{Q}_P}
  & 2^{\Sigma^\omega}
  \end{tikzcd}

    \caption{Sampler, Quantizer, and Eventifier diagram. The diagram commutes up to containment as the inverse transducers are proper.}
    \label{fig:transducers0}
\end{figure}

In layered architectures, we often need to relate systems that represent sets of behaviours of different nature, e.g. discrete-time and continuous-time signals. To facilitate the interfacing of such systems, we define special forms of \textit{transducer} maps (functionals) so that systems interfaced with such maps offer input/output signals compatible with the abstraction level employed at a higher layer. 


\begin{definition}[Periodic Sampler] \label{def:F-samp}
The map $\mcS_T:\mc{L}_\infty\to\mc{D}_\infty$, $$\mcS_T(x)=x_d,\,x_d(k)=x(kT),\,\forall\,k\in\N$$ is a \emph{periodic sampler with period $T$}.
\end{definition}
Observe that $\mcS_T$ is surjective but not injective, that is, maps multiple continuous time signals to the same discrete time signal. To define a generalized inverse of the sampler, first we introduce the \textit{linear periodic interpolator}.

\begin{definition}[Linear Periodic Interpolator]\label{def:F_inv-samp}
Given a discrete time signal $x_d:\N\to\R^n$, we denote by $\mathcal{T}_T:\mc{D}_\infty\to\mc{L}_\infty$ the \emph{linear periodic interpolator with period $T$}, such that $\hat{x}=\mathcal{T}_T(x_d)$ where $\forall\, t\in[kT, (k+1)T),\,k\in\N$:
\begin{eqnarray*}
\hat{x}(t)&=&x_d(k)+\frac{t-kT}{T}(x_d(k+1)-x_d(k)).    
\end{eqnarray*}
\end{definition}
The sampler, interpolator, and inverse sampler, which we define next, are general transducers applicable to signals of arbitrary length (finite or infinite). In the case of $x_d$ being a signal of length $1$, i.e., the singelton $x_d(0)$, $\hat{x}(t)=x_d(0),\,\forall\,t\in[0,T)$.

\begin{definition}[$(\epsilon,\delta)-T$ Inverse Sampler]
The $(\epsilon,\delta)-T$ inverse sampler is given by $\mcS_{(\epsilon,\delta)-T}^{-1}:\mc{D}_\infty\to 2^{\mc{L}_\infty}$, $$\mathcal{S}^{-1}_{(\epsilon,\delta)-T}(x_{d}):=\lbrace{x\,|\, \Vert x-\mc{T}_T(x_d) \Vert_\infty\leq \delta \rbrace}\cap\lbrace x\,|\,\Vert x(kT)-x_d(k)\Vert\leq\epsilon\rbrace,$$ the set of bounded continuous functions $(\epsilon,\delta)$ away from the linear periodic interpolator $\mc{T}_{T}(x_d):\R^+\to\R^n$ of $x_d$. 
\end{definition}
If $\epsilon=\delta$ we denote $\mcS_{\delta-T}^{-1}:=\mcS_{(\epsilon,\delta)-T}^{-1}$.
One can devise more general interpolators and derived sampler inverses, e.g., employing splines, or even aperiodic assignment of times. For illustrative purposes, we focus on this simple intuitive interpolator and corresponding inverse sampler in the current manuscript.

\begin{definition}[Quantizer]\label{def:F-quant}
Given a discrete time, $\mathbb{T}=\N$, (or continuous time, $\mathbb{T}=\R$) signal $x:\mathbb{T}\to\mc{X}$, and a partition $P\subset 2^\mc{X}$ of $\mc{X}=\cup_{[\tp_i]\in P}[\tp_i]$, denote by $\tp_i$ the representative of each set $[\tp_i]$, $\tp(x)=\tp$ s.t. $x\in[\tp]$, and $\Sigma=\lbrace\tp_i\rbrace$, the alphabet of symbols of the quantizer.
The map $\mathcal{Q}_P:\mc{D}_\infty\to\Sigma^\omega$ ($\mathcal{Q}_P:\mc{L}_\infty\to\Sigma^\aleph$), 
$$\mc{Q}_P(x)(t)=\tp(x(t)),\,\forall\, t\in\mathbb{T},$$    
is the \emph{quantizer with partition $P$}.
\end{definition}
For ease of exposition, we consider in what follows only quantizers for discrete time signals.
Analogously to the sampler, we can define now an inverse quantizer as follows.

\begin{definition}[$P$ Inverse Quantizer]\label{def:F-invquant}
The $P$ inverse quantizer is given by $\mc{Q}_{P}^{-1}:\Sigma^\omega\to 2^{\mc{H}}$, with $\mc{H}$ equal to $\mc{L}_\infty$ or $\mc{D}_\infty$, 
$$\mathcal{Q}^{-1}_{P}(x_{q}):=\lbrace{x_d\,|\, x_d(t)\in[x_q(t)] \rbrace},\; t\in\mathbb{T}.$$ 
\end{definition}

We can go one step further than quantizing space and time, and transform signals of symbols into \emph{discrete events}
defined as a change of symbol of the signal:
\begin{definition}[Eventifier]\label{def:F-event}
Given a sequence $x^e_q$ defined iteratively starting with $k=0,\, r=1,\, x^e_q(0)=x_q(0)$ as:
\begin{eqnarray*}
    x^e_q(r)=x_q(k+1),\, r\leftarrow r+1,\, k\leftarrow k+1  & \text{if} & x_q(k+1)\neq x_q(k)\\
    k\leftarrow k+1  & \text{if} & x_q(k+1)=x_q(k).
\end{eqnarray*} 
The map $\mc{E}:\Sigma^\omega\to \Sigma^\omega$, $\mathcal{E}(x_{q}):=x^e_{q}$ is the \emph{eventifier}.
\end{definition}
We can define also a family of inverse eventifiers:
\begin{definition}[$\ell$ Inverse Eventifier]\label{def:F-invevent}
The $\ell$ inverse eventifier is given by $\mc{E}_{\ell}^{-1}:\Sigma^\omega\to 2^{\Sigma_\omega}$, where
$$\mathcal{E}^{-1}_{\ell}(x^e_{q}):=\lbrace{x_q \in \Sigma^\omega_\ell\,|\, \mathcal{E}(x_q)=x^e_q \rbrace},$$ 
and $\Sigma^\omega_\ell$ is the set of signals satisfying 
$$\forall k \in \N \quad \exists r(k) \leq l \quad \text{s.t.}\quad  x_q(k+r(k)) \neq x_q(k).$$
\end{definition}
Intuitively, the inverse of an eventified signal $x^e_q$ is the set of all signals that have symbol changes separated by at most $\ell$ time steps, that, when eventified, result in $x^e_q$. We now employ all the defined maps to define a relation between signals of different types:
\begin{definition}[Signal relations]
Given 
a map $\mathcal{F}:\mc{Y}\to 2^\mc{X}$, the relation $\sim_{\mc{F}} \subseteq \mc{X}\times \mc{Y}$ is given by $x \sim_{\mc{F}} y \Leftrightarrow x \in \mc{F}(y)$. 
\end{definition}
Note that samplers, quantizers, eventifiers, and their inverses, are particular types of transducers between signal spaces familiar to most readers. Nothing prevents us from defining other transducers between such spaces, and inverses mappings to their power sets, allowing to define more exotic ways to relate functions of diverse nature. Observe that for all the transducers $\mc{F}$ we defined ($\mcS$, $\mc{Q}$ and $\mc{E}$), the associated inverses are \emph{proper} in the following sense:
\begin{definition}[Proper inverses]
    For a \emph{transducer} $\mc{F}:\mc{X}\to\mc{Y}$, we say $\mc{F}^{-1}:\mc{Y}\to2^\mc{X}$ is a \emph{proper inverse} if and only if:
    $$
    x\in \mc{F}^{-1}\circ\mc{F}(x),\;\text{and}\;y\in \mc{F}\circ\mc{F}^{-1}(y).
    $$
\end{definition}


\subsection{Systems}\label{symbolic_model} 
To provide a common language for comparing systems of a diverse nature, we recall the notion of a \emph{generalized labeled transition system},
akin to a Mealy counterpart of the systems as introduced in \cite{Paulo}.


\begin{definition}[Generalized Labeled Transition Systems]\label{system}
A system $S$ is a tuple $S=\left(X,X_0,U,\rTo,Y,H\right)$ consisting of:
\begin{itemize}
\item a (possibly infinite) set of states $X$; 
\item a (possibly infinite) set of initial states $X_0\subseteq{X}$;
\item a (possibly infinite) set of inputs $U$;
\item a transition relation $\rTo\subseteq X\times U\times X$;
\item a set of outputs $Y$; and
\item an output map $H:\rTo \rightarrow Y$.
\end{itemize}
\end{definition}

A transition $(x,u,x')\in\rTo$ is also denoted by $x\rTo^{u} x'$. 
If $x\rTo^{u} x'$, state $x'$ is called a $u$-successor of state $x$. We denote by $\textbf{Post}_u(x)$ the set of all \mbox{$u$-successors} of a state $x$, and by $U(x)$ the set of inputs $u\in{U}$ for which $\textbf{Post}_u(x)$ is nonempty. We say the system is \emph{deterministic} if for all states $x$ and inputs $u\in U(x)$, it holds that $|\textbf{Post}_u(x)|=1$; if for some state $x$, $U(x)=\emptyset$ we say the system is \emph{blocking}. 
We also employ $\mc{B}(S)\subseteq 2^{U^\omega\times Y^\omega}$ to denote the set of \textit{behaviours} a system may generate, where behaviours are pairs of (possibly infinite) sequences of input and output symbols respecting the system dynamics, i.e., $(u,y)\in U^\omega\times Y^\omega$ is a behaviour of $S$ iff $\exists x\in X^\omega:\,\forall\, k\in\N,\, (x(k),u(k),x(k+1))\in\rTo\,\wedge\,y(k)=H(x(k),u(k),x(k+1))$. 
Finally, we denominate a system as \textit{open} if the cardinality of its input set is larger than one, i.e., $|U|>1$, and we say it is \textit{closed} otherwise. This classification of systems as open and closed is taken for coherence with the terminology later introduced in the context of contract theory.

\subsection{Temporal logics}

In our running example we employ specifications at the highest layer given by temporal logics, in particular LTL:
\begin{definition}[LTL Syntax \cite{ModelChecking}]
    A LTL formula $\phi$ defined over a set of atomic propositions $\mathcal{AP}$ is recursively defined as:
    \begin{eqnarray*}
        \phi = \top\,|\,p\,|\,\phi_1\wedge\phi_2\,|\,\neg\phi\,|\,\bigcirc \phi\,|\,\phi_1 \mathbf{U} \phi_2,
    \end{eqnarray*}
    where $p\in\mathcal{AP}$ is an atomic proposition, $\phi,\phi_1,\phi_2$ are LTL formulas, $\bigcirc$ is the ``next" temporal operator, and $\mathbf{U}$ is the ``until" temporal operator.
\end{definition}
Given the above syntax, one can define other familiar temporal operators, such as ``eventually" $\diamond\phi\coloneqq\top\mathbf{U}\phi$ and ``always" $\square\phi\coloneqq\neg\Diamond\neg\phi$, among others. The semantics of LTL formulas are interpreted over infinite words $\sigma$ in $2^{\mathcal{AP}}$ -- denoted by $\sigma\in(2^{\mathcal{AP}})^\omega$ -- and are defined as follows.
\begin{definition}[LTL Semantics \cite{ModelChecking}]
    The satisfaction of an LTL formula $\phi$ over $\mathcal{AP}$ by a word $\sigma=\sigma_0\sigma_1\sigma_2\dots\in(2^{\mathcal{AP}})^{\omega}$, denoted by $\sigma\models\phi$, is recursively defined as:
    \begin{itemize}
        \item $\sigma\models \top$;
        \item $\sigma\models p$ if and only if $p\in\sigma_0$;
        \item $\sigma\models \phi_1\wedge\phi_2$ if and only if $\sigma\models \phi_1$ and $\sigma\models \phi_2$;
        \item $\sigma\models \neg\phi$ if and only if $\sigma \nvDash \phi$;
        \item $\sigma\models \bigcirc \phi$ if and only if $\sigma_1\sigma_2\sigma_3\dots\models \phi$;
        \item $\sigma\models \phi_1\mathbf{U}\phi_2$ if and only if $\exists j\geq0$ such that $\sigma_j\sigma_{j+1}\sigma_{j+2}\dots\models \phi_2$ and $\sigma_i\sigma_{i+1}\sigma_{i+2}\dots\models \phi_1$ for all $0\leq i < j$.
    \end{itemize}
\end{definition}
Note that any word satisfying an LTL formula $\sigma\models\phi$ can always be written as a finite prefix of symbols $\sigma_{\mathrm{pre}}\in (2^{\mathcal{AP}})^*$ followed by the infinite repetition of a finite suffix $\sigma_{\mathrm{suff}}\in(2^{\mathcal{AP}})^{*}$ so that each satisfying word takes the form $\sigma=\sigma_{\mathrm{pre}}(\sigma_{\mathrm{suff}})^\omega$. 

\subsection{Assume-Guarantee Contract Theory}

Contract theory is a theory for compositional reasoning aimed at facilitating decoupled system design. The idea behind contract theory is to define interfaces supporting the specification of subsystems' functionalities over restricted contexts of operation, and employ these interfaces to establish correctness of the system integration with respect to a system-wide specification. We summarize the main concepts of the general meta-theory of contracts, and focus on the particular case of assume-guarantee (A/G) contracts following~\cite{Benveniste}.

Contract theory starts with the introduction of \textit{components} $M$ in some universe $\mc{M}$, and of abstractions of components, termed \textit{contracts}, $\mc{C}$. Naturally, a notion of composition between components, denoted by $\Vert$, is required. 
In contract theory, components may be \textit{open} if they exhibit means to interact with the outside world and \textit{closed} otherwise. An \emph{environment} for a component $M$ is another component $E$ composable with $M$, such that $M\Vert E$ is closed. Given two components $M, M'$ we say $M$ is a subcomponent of $M'$, denoted $M\leq M'$, if $M$ satisfies all properties of $M'$.
We now introduce the fundamental object of study of this theory:

\begin{definition}[Contracts~\cite{Benveniste}]
A \emph{contract} $\mc{C}=(\mc{E}_c,\mc{M}_c)$ is a pair where:
\begin{itemize}
    \item $\mc{M}_c\subseteq \mc{M}$ is the set of \emph{implementations} of $\mc{C}$, and
    \item $\mc{E}_c\subseteq \mc{M}$ is the set of \emph{environments} of $\mc{C}$.
\end{itemize}
For any pair $(E,M)\in\mc{E}_c\times \mc{M}_c$, $E$ is an environment for $M$. Hence, $M\Vert E$ must be well defined and closed. A contract possessing no implementation is called \emph{inconsistent}. A contract possessing no environment is called \emph{incompatible}. We write: 
$$
M\models^M \mc{C}\;\text{and}\; E\models^E \mc{C}
$$
to express that $M\in\mc{M}_c$ and $E\in\mc{E}_c$, respectively.
\end{definition}
In any concrete theory of contracts, such contracts must be expressed in some finite way.
In A/G contract theory, contracts are expressed as behavioural specifications. 
\begin{definition}[A/G contracts~\cite{Benveniste}]
An A/G-contract is a pair $\mc{C}=(A,G)$ of assertions, called the assumptions and the guarantees. The set $\mc{E}_c$ of the legal environments for $\mc{C}$ collects all components $E$ such that $E\subseteq A$. The set $\mc{M}_c$ of all components implementing $\mc{C}$ is defined by $A\Vert M\subseteq G$.
\end{definition}
An \textit{environment} for the contract is a component $E$ such that $E\models A$, and an implementation of the contract is any component $M$ such that for all environments $E$ for the contract $M\Vert E \models G$.
Consequently, two contracts $\mc{C}, \mc{C}'$ with $G'\vee \neg A'=G\vee \neg A$ have identical implementations' sets $\mc{M}_c=\mc{M}'_c$ making them equivalent.
Every contract admits an equivalent so called \emph{saturated} contract satisfying $G\vee A =\top$, attained by simply setting $G'=G\vee\neg A$. We say a saturated contract $\mc{C}=(A,G)$ is \textit{consistent} if and only if $G\neq \emptyset$ and \textit{compatible} if and only if $A\neq\emptyset$. A notion of contract \textit{refinement} defining a preorder is 
introduced also in the literature~\cite{Benveniste} which enables the formalization of composition of contracts.

\begin{definition}[Contract composition~\cite{Incer}]
Given contracts $\mc{C}$ and $\mc{C}'$, their composition, written $\mc{C}_1\otimes\mc{C}_2$, is the smallest contract in the refinement order such that:
\begin{itemize}
    \item composing implementations of each contract results in an implementation satisfying the contract composition; and
    \item composing an environment for the composition with an implementation for $\mc{C}_2$ produces an environment for $\mc{C}_1$, and viceversa.
\end{itemize}
\end{definition}
In terms of the components of \emph{saturated} A/G contracts:
$$
\mc{C}_1\otimes\mc{C}_2 = (A_1\cap A_2\cup \neg(G_1 \cap G_2),\;G_1 \cap G_2).
$$
Contract composition is commutative, associative (under compatibility conditions)~\cite{Benveniste}, and monotonic with respect to the refinement preorder~\cite{Incer}.

\section{Vertical system relations}

To relate different systems (vertically), we introduce novel notions of (alternating) (bi)simulation relations, akin to those introduced in \cite{Girard}, but adapted to the Mealy-type systems we manipulate, and generalized to not require the same metric space for both systems being related.




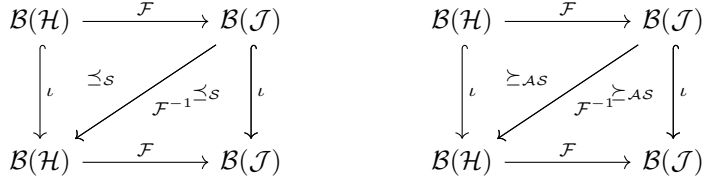
\begin{figure}
\begin{tikzcd}[column sep=huge,row sep=huge]
   \mc{B}(\mc{H}) 
   \arrow{r}{\mc{F}}\arrow[hookrightarrow]{d}{\iota}
   & \mc{B}(\mc{J})
   \arrow{dl}{\mc{F}^{-1}}
   \arrow[swap]{dl}{{\preceq_\mcS}\phantom{aa}} 
   \arrow[hookrightarrow]{d}{\iota}
   \arrow[hookrightarrow, swap]{d}{{\preceq_\mcS}\phantom{aa}}
   &
    \mc{B}(\mc{H}) 
   \arrow{r}{\mc{F}}\arrow[hookrightarrow]{d}{\iota}
   & \mc{B}(\mc{J})
   \arrow{dl}{\mc{F}^{-1}}
   \arrow[swap]{dl}{{\succeq_\mcAS}\phantom{a}} 
   \arrow[hookrightarrow]{d}{\iota}
   \arrow[hookrightarrow, swap]{d}{{\succeq_\mcAS}\phantom{a}}
   \\
    \mc{B}(\mc{H})
   \arrow{r}{\mc{F}} 
   & \mc{B}(\mc{J})
   &
   \mc{B}(\mc{H})
   \arrow{r}{\mc{F}} 
   & \mc{B}(\mc{J})
  \end{tikzcd}    
    \caption{Transducer commutative diagram up to (alternating) simulation. $\mc{H}$ and $\mc{J}$ denote signal spaces related through the transducer $\mc{F}:\mc{H}\to\mc{J}$, and $\mc{B}(\cdot)$ is the space of systems defined over the corresponding signal space.}
    \label{fig:transducers}
\end{figure}

\begin{figure}
\begin{tikzcd}[column sep=huge,row sep=huge]
   S_H \arrow{r}{\mc{F}}\arrow[dashed, tail]{d}{\preceq_{S}/\succeq_{AS}}
   & \mc{F}(S_H)\arrow{r}{\preceq_{S}/\succeq_{AS}} \arrow{dl}{\mc{F}^{-1}} 
  & S_J \arrow{dl}{\mc{F}^{-1}}\arrow[dashed, tail]{d}{\preceq_{S}/\succeq_{AS}}
   \\
    \mc{F}^{-1}\circ\mc{F}(S_H)\arrow{r}{\preceq_{S}/\succeq_{AS}} 
   & \mc{F}^{-1}(S_J)\arrow{r}{\mc{F}} 
   & \mc{F}\circ\mc{F}^{-1}(S_J)
  \end{tikzcd}
      
    \caption{Systems, transducers, and relations diagram.}
    \label{fig:sys_rels}
\end{figure}
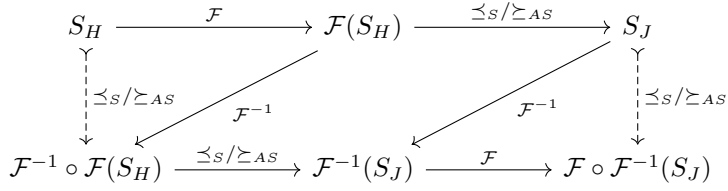

\begin{definition}[$\mcFi$-simulation relation]\label{def:Fsim}
Let $\mcF=\mcF_u\times\mcF_y$, $\mcF_u:U_a\to U_b$, $\mcF_y:Y_a\to Y_b$ be a transducer between two spaces, with corresponding proper inverse $\mcFi=\mcFi_u\times\mcFi_y$, $\mcFi_u:U_b\to2^{U_a}$ and $\mcFi_y:Y_b\to 2^{Y_a}$, and consider two systems
$S_{a}=(X_{a},X_{a0},U_{a},\rTo_{a},Y_a,H_{a})$ and
$S_{b}=(X_{b},X_{b0},U_{b},\rTo_{b},Y_b,H_{b})$. 
A relation \mbox{$R_{\mcFi}\subseteq X_{a}\times X_{b}$} is said to be an 
$\mcFi$-simulation relation from $S_{a}$ to $S_{b}$ 
if the following two conditions are satisfied:
\begin{itemize}
\item[(i)] for every $x_{a}\in{X_{a0}}$, there exists $x_{b}\in{X_{b0}}$ with $(x_{a},x_{b})\in{R}_{\mcFi}$;
\item[(ii)] for every $(x_a,x_b)\in R_{\mcFi}$, for every $u_a\in U(x_a)$, there is $u_b=\mcF_u(u_a)\in U(x_b)$ such that for every $x_a\rTo_{a}^{u_a}x'_a$ in $S_a$ there is $x_b\rTo_{b}^{u_b}x'_b$ in $S_b$ satisfying $(x'_a,x'_b)\in R_{\mcFi}$ and $H_a(x_a\rTo_{a}^{u_a}x'_a) \sim_{\mcFi_y} H_b(x_b\rTo_{b}^{u_b}x'_b)$.
\end{itemize}  
\end{definition}
System $S_{a}$ is 
$\mcFi$-simulated by $S_{b}$, or $S_b$ 
$\mcFi$-simulates $S_a$, denoted by \mbox{$S_{a}\preceq_{\mathcal{S},\mcFi}S_{b}$}, if there exists
an 
$\mcFi$-simulation relation from $S_{a}$ to $S_{b}$. A relation $R_{\mcFi}\subseteq X_a\times X_b$ is said to be an 
$\mcFi$-bisimulation relation between $S_a$ and $S_b$
if $R_{\mcFi}$ is an 
$\mcFi$-simulation relation from $S_a$ to $S_b$ and
$R^{-1}_{\mcFi}$ is an 
$\mcFi$-simulation relation from $S_b$ to $S_a$, where $\sim_{\mcFi}\subseteq Y_b\times Y_a$ such that $y_b\sim_{\mcFi}y_a\;\text{iff}\; y_a\sim_{\mcFi}y_b$.
We denote system $S_{a}$ being 
$\mcFi$-bisimilar to $S_{b}$ by \mbox{$S_{a}\cong_{\mathcal{S},\mcFi}S_{b}$}, if there exists
an $\mcFi$-bisimulation relation between $S_{a}$ and $S_{b}$.

For nondeterministic systems we need to consider relations that explicitly capture the adversarial nature of nondeterminism. Such relations are necessary to enable the refinement of controllers~\cite{Paulo}. Again, we generalize the definition as earlier to cover different metric spaces and outputs on transitions.


\begin{definition}[Alternating $\mcFi$-simulation relation]\label{def:Faltsim}
Let $\mcF=\mcF_u\times\mcF_y$, $\mcF_u:U_a\to U_b$, $\mcF_y:Y_a\to Y_b$ be a transducer between two spaces, with corresponding proper inverse $\mcFi=\mcFi_u\times\mcFi_y$, $\mcFi_u:U_b\to2^{U_a}$ and $\mcFi_y:Y_b\to 2^{Y_a}$, and consider two systems
$S_{a}=(X_{a},X_{a0},U_{a},\rTo_{a},Y_a,H_{a})$,
$S_{b}=(X_{b},X_{b0},U_{b},\rTo_{b},Y_b,H_{b})$. 
A relation 
\mbox{$R_{\mcFi}\subseteq X_{a}\times X_{b}$} is said to be an alternating 
$\mcFi$-simulation relation from $S_{a}$ to $S_{b}$ 
if condition (i) 
in Definition \ref{def:Fsim}, as well as the following condition, are satisfied:
\begin{itemize}
\item[(ii)] for every $(x_a,x_b)\in R_{\mcFi}$ and for every $u_a\in U_a\left(x_a\right)$ there exists some $u_b\in U_b\left(x_b\right)$ such that $u_b=\mcF_u(u_a)$ and for every $x'_b\in\textbf{Post}_{u_b}(x_b)$ there exists $x'_a\in\textbf{Post}_{u_a}(x_a)$ satisfying $(x'_a,x'_b)\in R_{\mcFi}$ and $H_a(x_a\rTo_{a}^{u_a}x'_a) \sim_{\mcF_y^{-1}} H_b(x_b\rTo_{b}^{u_b}x'_b)$.
\end{itemize}  
\end{definition}

System $S_{a}$ is alternatingly 
$\mcFi$-simulated by $S_{b}$, or $S_b$ alternatingly 
$\mcFi$-simulates $S_a$, 
denoted by \mbox{$S_{a}\preceq_{\mathcal{AS},\mcFi}S_{b}$}, if there exists
an alternating 
$\mcFi$-simulation relation from $S_{a}$ to $S_{b}$.
A relation $R_{\mcFi}\subseteq X_a\times X_b$ is said to be an alternating 
$\mcFi$-bisimulation relation between $S_a$ and $S_b$
if $R_{\mcFi}$ is an alternating 
$\mcFi$-simulation relation from $S_a$ to $S_b$ and
$R^{-1}_{\mcFi}$ is an alternating 
$\mcFi$-simulation relation from $S_b$ to $S_a$.
We denote system $S_{a}$ being alternatingly $\mcFi$-bisimilar to $S_{b}$ by \mbox{$S_{a}\cong_{\mathcal{AS},\mcFi}S_{b}$}, if there exists
an alternating 
$\mcFi$-bisimulation relation between $S_{a}$ and $S_{b}$. Whenever $\mcFi=\mcF=\mbf{id}$ we simply drop the symbol $\mcFi$ from the relations and talk simply about (alternating) (bi)simulation relations.
We introduce now the notion of transduced systems, that is, systems with their open interfaces affected by transducers to make them compatible with systems defined over different input/output sets.

\begin{definition}[$\mc{F}$-transduced system]
\label{def:F-sys}
    Given a system $S=\left(X,X_{0},U,\rTo,Y,H\right)$ and the transducers 
    $\mc{F}=\mc{F}_u\times\mc{F}_y$, $\mc{F}^{-1}=\mc{F}^{-1}_u\times\mc{F}^{-1}_y$ with $\mc{F}_y:Y\to Y_f$, \mbox{$\mc{F}_u:U\to U_f$}, $\mc{F}_y^{-1}:Y_f\to 2^Y$, and $\mc{F}_u^{-1}:U_f\to 2^U$ we call the \emph{$\mc{F}$-transduced system} to the system: 
    $$
    \mc{F}(S,\mc{F}^{-1})=\left(X,X_{0},U_f,\rTo_f,Y_f,\mc{F}_y\circ H\right),$$ where
    $$
    (x,u_f,x')\in\rTo_f \Leftrightarrow \exists u\in \mc{F}^{-1}_u(u_f)\,\text{s.t.}\,(x,u,x')\in\rTo\text{.}
    $$
\end{definition}
\begin{definition}[$\mc{F}^{-1}$-transduced system]
\label{def:F_inv-sys}
    Given a system $S=\left(X,X_{0},U,\rTo,Y,H\right)$ and the transducers $\mc{F}=\mc{F}_u\times\mc{F}_y$, $\mc{F}^{-1}=\mc{F}^{-1}_u\times\mc{F}^{-1}_y$ with $\mc{F}_y:Y_f\to Y$, \mbox{$\mc{F}_u:U_f\to U$}, $\mc{F}_y^{-1}:Y\to 2^{Y_f}$, and $\mc{F}_u^{-1}:U\to 2^{U_f}$ we call the \emph{$\mc{F}^{-1}$-transduced system} to the system: $$\mc{F}^{-1}(S)=\left(X,X_{0},2^{U_f},\rTo_f,2^{Y_f},\mc{F}^{-1}_y\circ H\right),$$ where 
    $$
    (x,u_f,x')\in\rTo_f \Leftrightarrow \forall u'\in \mc{F}_u(u_f),\; (x,u',x')\in\rTo\text{.}
    $$
\end{definition}
We include $\mc{F}^{-1}$ in the system notation $\mc{F}(S,\mc{F}^{-1})$, as, for a transducer $\mc{F}$, one may define multiple different inverses. We employ simply $\mc{F}(S)$ whenever the inverse is clear from the context. With these definitions in place we can state a few basic results. In each result, $S_a$, $S_b$, and $\mc{F}$ are defined as in Def. \ref{def:Fsim} and \ref{def:Faltsim}.
\begin{proposition}
\label{prop:F_sims}
 $$S_a\preceq_{\mcS} S_b \implies \mcF(S_a) \preceq_{\mcS} \mcF(S_b)$$
 $$S_a\preceq_{\mcAS} S_b \implies \mcF(S_a) \preceq_{\mcAS} \mcF(S_b)$$
\end{proposition}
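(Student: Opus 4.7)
The plan is to reuse the (alternating) simulation relation $R$ witnessing the hypothesis as a relation between the transduced systems. Since Definition \ref{def:F-sys} leaves the state set and the initial states of $S$ unchanged, condition (i) of Definitions \ref{def:Fsim} and \ref{def:Faltsim}, viewed now as a statement about $R\subseteq X_a\times X_b$ between $\mcF(S_a)$ and $\mcF(S_b)$, transfers verbatim from the hypothesis. The work therefore concentrates on the dynamic condition (ii), and on checking that the transduced output map $\mcF_y\circ H$ is compatible.

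For the simulation implication, I would fix $(x_a,x_b)\in R$ and an arbitrary transduced transition $x_a\rTo_{a,f}^{u_f}x_a'$ of $\mcF(S_a)$. Unfolding the existential in Definition \ref{def:F-sys} extracts a witness $u\in\mcFi_u(u_f)$ with $x_a\rTo_a^{u}x_a'$ in $S_a$. Applying the hypothesis $S_a\preceq_{\mcS}S_b$ to this transition produces $x_b\rTo_b^{u}x_b'$ with $(x_a',x_b')\in R$ and $H_a(x_a\rTo_a^{u}x_a')=H_b(x_b\rTo_b^{u}x_b')$. Re-folding through Definition \ref{def:F-sys} with the same $u\in\mcFi_u(u_f)$ yields $x_b\rTo_{b,f}^{u_f}x_b'$ in $\mcF(S_b)$, and applying $\mcF_y$ to equal outputs preserves equality, so the outputs in $\mcF(S_a)$ and $\mcF(S_b)$ agree along the matched pair of transitions.

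For the alternating simulation case, I follow the same unfold/apply/re-fold scheme but input-first: given $(x_a,x_b)\in R$ and $u_f\in U_f(x_a)$ in $\mcF(S_a)$, I extract some $u\in\mcFi_u(u_f)\cap U_a(x_a)$; alternating simulation at $u$ gives $u\in U_b(x_b)$ (hence $u_f\in U_f(x_b)$) and for each $x_b'\in\Post_u(x_b)$ a matching $x_a'\in\Post_u(x_a)\subseteq\Post_{u_f}(x_a)$ with $(x_a',x_b')\in R$. The subtle point I expect to be the main obstacle is post-image coverage: an adversarial successor $x_b'\in\Post_{u_f}(x_b)$ of $\mcF(S_b)$ need not be witnessed by the same $u$ I fixed in $\mcF(S_a)$, but may come from some $u''\in\mcFi_u(u_f)$. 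My plan to close this gap is to iterate the alternating-simulation step over every $u'\in\mcFi_u(u_f)\cap U_a(x_a)$ and use the decomposition $\Post_{u_f}(x)=\bigcup_{u'\in\mcFi_u(u_f)}\Post_{u'}(x)$ from Definition \ref{def:F-sys} to match each adversarial witness in $\mcF(S_b)$ with a corresponding play in $\mcF(S_a)$. Carefully pairing the two indexings over $\mcFi_u(u_f)$, and invoking properness of $\mcFi$ where needed, is where I expect the argument to require the most care before the output condition, preserved by $\mcF_y$ as in the simulation case, concludes the proof.
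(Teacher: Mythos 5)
Your treatment of the first implication (simulation) is correct and is the obvious way to fill in the paper's one-line proof: fix a transduced transition of $\mcF(S_a)$, extract a witness $u\in\mcFi_u(u_f)$ from Definition~\ref{def:F-sys}, apply $S_a\preceq_\mcS S_b$ at $u$, re-fold with the same $u$ to get a matching transition of $\mcF(S_b)$, and push the output equality through $\mcF_y$.

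The obstacle you flag for the alternating implication is genuine, but the repair you sketch does not close it. Unfolding Definition~\ref{def:F-sys}, the adversarial post-set in $\mcF(S_b)$ is $\bigcup_{u''\in\mcFi_u(u_f)\cap U_b(x_b)}\textbf{Post}_{u''}(x_b)$, whereas iterating the hypothesis over $u'\in\mcFi_u(u_f)\cap U_a(x_a)$ only covers $\bigcup_{u'\in\mcFi_u(u_f)\cap U_a(x_a)}\textbf{Post}_{u'}(x_b)$. Alternating simulation from $S_a$ to $S_b$ gives $U_a(x_a)\subseteq U_b(x_b)$ along $R$, not the converse, so there may exist $u''\in\mcFi_u(u_f)$ enabled at $x_b$ but not at $x_a$; an adversarial choice $x'_b\in\textbf{Post}_{u''}(x_b)$ is then reached by no step of your iteration, and Definition~\ref{def:Faltsim} gives you no move at $u''$ because its quantifier ranges over $u_a\in U_a(x_a)$ only. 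Properness of $\mcFi$ does not help: it constrains $\mcFi_u\circ\mcF_u$ and $\mcF_u\circ\mcFi_u$, not the relative sizes of $U_a(x_a)$ and $U_b(x_b)$. Concretely, take $S_a$ with a single transition $x_a\rTo^{1}x'_a$, $S_b$ with two transitions $x_b\rTo^{1}x'_b$ and $x_b\rTo^{2}x''_b$ carrying distinct outputs, and $\mcF_u$ identifying inputs $1$ and $2$ with $\mcF_y$ injective: then $S_a\preceq_\mcAS S_b$ via $R=\{(x_a,x_b),(x'_a,x'_b)\}$, but $\mcF(S_b)$ has a $u_f$-successor whose output $\mcF(S_a)$ cannot match, so $\mcF(S_a)\not\preceq_\mcAS\mcF(S_b)$. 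To make your argument go through you would need a further hypothesis of the form $\mcFi_u(u_f)\cap U_b(x_b)\subseteq U_a(x_a)$ along $R$ (e.g., $\mcF_u$ injective, or related states enabling the same inputs), which the definitions do not provide.
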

\begin{proof}
    This is a direct result of applying Definition~\ref{def:F-sys}.
\end{proof}
\begin{proposition}
\label{prop:simsFsims}
    $$S_a\preceq_{\mcSFi} S_b \Leftrightarrow S_a\preceq_{\mcS} \mcFi(S_b)$$
    $$S_b\preceq_{\mcASFi} S_a \Leftrightarrow \mcFi(S_b)\preceq_{\mcAS} S_a$$
\end{proposition}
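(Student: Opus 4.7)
The plan is to prove both equivalences by taking the same state-space subset $R \subseteq X_a \times X_b$ as the witnessing relation on both sides and verifying the defining clauses term-by-term. For the first equivalence, I would show that $R$ is an $\mcFi$-simulation relation from $S_a$ to $S_b$ in the sense of Def.~\ref{def:Fsim} if and only if the same $R$ is an ordinary simulation relation from $S_a$ to $\mcFi(S_b)$ (the transduced system of Def.~\ref{def:F_inv-sys}). Condition (i) transfers immediately in both directions because $\mcFi(S_b)$ preserves the initial states of $S_b$ by construction.

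For the forward direction of condition (ii), I take $(x_a,x_b) \in R$ and $u_a \in U(x_a)$; the hypothesis supplies $u_b = \mcF_u(u_a)$, a witnessing transition $x_b \rTo_b^{u_b} x_b'$, the successor $(x_a', x_b') \in R$, and the output compatibility $H_a(\cdot) \sim_{\mcFi_y} H_b(\cdot)$. Using the proper-inverse property $u_a \in \mcFi_u \circ \mcF_u(u_a)$ together with the transition rule of Def.~\ref{def:F_inv-sys}, I lift $x_b \rTo_b^{u_b} x_b'$ to a transition of $\mcFi(S_b)$ under the canonical lift of $u_a$; its output $\mcFi_y \circ H_b(\cdot)$ then contains $H_a(\cdot)$, which is exactly the ordinary output-matching condition on $\mcFi(S_b)$ once outputs are read via set membership. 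The reverse direction simply runs the same argument backwards, using Def.~\ref{def:F_inv-sys} to descend any $\mcFi(S_b)$-transition back to an $S_b$-transition under $\mcF_u(u_a)$. The alternating equivalence $S_b \preceq_{\mcASFi} S_a \Leftrightarrow \mcFi(S_b) \preceq_{\mcAS} S_a$ follows by the same state relation and a structurally identical argument with Def.~\ref{def:Fsim} replaced by Def.~\ref{def:Faltsim}; the adversarial pattern ``$\forall u_a \,\exists u_b = \mcF_u(u_a)$, then $\forall x_b' \,\exists x_a'$'' matches the ordinary alternating simulation pattern after unfolding $\mcFi(S_b)$.

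The main obstacle I anticipate is the output-map bookkeeping: $\mcFi(S_b)$ carries the output map $\mcFi_y \circ H_b$, which returns subsets of $Y_a$, so ``ordinary'' simulation on $\mcFi(S_b)$ has to be read with outputs compared by set membership rather than equality. I would handle this by explicitly invoking the signal relation $y_a \sim_{\mcFi_y} y_b \Leftrightarrow y_a \in \mcFi_y(y_b)$ from the Signal-relations definition to mediate cleanly between the two conventions. A secondary subtlety is choosing the right canonical lift of $u_a \in U_a$ into the powerset-style input space of $\mcFi(S_b)$; the choice $\{u_a\}$, whose $\mcF_u$-image is the singleton $\{\mcF_u(u_a)\}$, collapses the universal quantifier in Def.~\ref{def:F_inv-sys} to the single requirement $(x_b, \mcF_u(u_a), x_b') \in \rTo_b$, which closes both directions of both equivalences with no further work.
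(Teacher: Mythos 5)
Your proposal matches the paper's proof essentially line for line: both keep the same state relation on both sides of the equivalence, dismiss condition~(i) as immediate since $\mcFi(S_b)$ inherits the state and initial-state sets of $S_b$, lift $u_a$ to the singleton input of the transduced system so that the universal quantifier of Definition~\ref{def:F_inv-sys} collapses to a single $S_b$-transition under $\mcF_u(u_a)$, and read the output clause on $\mcFi(S_b)$ as membership $H_a(\cdot)\in\mcFi_y\circ H_b(\cdot)$. The handling of the alternating half by the same relation and the same singleton device is likewise how the paper proceeds.
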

\begin{proof}
    Provided in the Appendix.
\end{proof}

The previous results demonstrate that simulation relations between two systems hold when each system is $\mcF$-transduced (Prop. \ref{prop:F_sims}) and that there is a one-to-one correspondence between (alternating) $\mcFi$-simulation relations and classical simulations relations (Prop. \ref{prop:simsFsims}). By combining these propositions we can also show that (alternating) $\mcFi$-simulation relations are transitive.
\begin{proposition}[Transitivity]
\label{prop:transitivity}
Given three systems $S_a$, $S_b$, and $S_c$ satisfying:
\begin{eqnarray}\label{eq:transitivity-conditions}
    &S_a \preceq_{\mcS,\mcFi_a} S_b \preceq_{\mcAS,\mcFi_a} S_a&\\
    &S_b \preceq_{\mcS,\mcFi_b} S_c \preceq_{\mcAS,\mcFi_b} S_b,&
\end{eqnarray}
with $\mcFi_a=\mcFi_{u_a}\times\mcFi_{y_a}$, where $\mcFi_{u_a}\,:\,U_b\rightarrow 2^{U_a}$ and $\mcFi_{y_a}\,:\,Y_b\rightarrow 2^{Y_a}$, and $\mcFi_b=\mcFi_{u_b}\times\mcFi_{y_b}$, where $\mcFi_{u_b}\,:\,U_c\rightarrow 2^{U_b}$ and $\mcFi_{y_b}\,:\,Y_c\rightarrow 2^{Y_b}$, define the composite transducers:
\begin{equation*}
\begin{aligned}
     \mcFi_{u_{ab}}\coloneqq \mcFi_{u_a} \circ \mcFi_{u_b},\quad & \mcFi_{u_{ab}}\,:\,U_c\rightarrow 2^{U_a}, \\ 
     \mcFi_{y_{ab}}\coloneqq \mcFi_{y_a} \circ \mcFi_{y_b},\quad & \mcFi_{y_{ab}}\,:\,Y_c\rightarrow 2^{Y_a}. \\ 
\end{aligned}
\end{equation*}
Then:
\begin{equation}\label{eq:transitivity}
    S_a \preceq_{\mcS,\mcFi_{ab}} S_c \preceq_{\mcAS,\mcFi_{ab}} S_a,
\end{equation}
where $\mcFi_{ab}\coloneqq \mcFi_{u_{ab}}\times \mcFi_{y_{ab}}$.
\end{proposition}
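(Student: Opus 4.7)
My plan is to prove both halves of \eqref{eq:transitivity} by directly composing the witness relations provided by the hypotheses \eqref{eq:transitivity-conditions}. Let $R^{s}_a\subseteq X_a\times X_b$ and $R^{s}_b\subseteq X_b\times X_c$ be $\mcFi_a$- and $\mcFi_b$-simulation relations witnessing $S_a\preceq_{\mcS,\mcFi_a}S_b$ and $S_b\preceq_{\mcS,\mcFi_b}S_c$, and let $R^{as}_a\subseteq X_b\times X_a$ and $R^{as}_b\subseteq X_c\times X_b$ be the corresponding alternating-simulation witnesses. Define the relational compositions $R^{s}\coloneqq\{(x_a,x_c)\,:\,\exists\,x_b\in X_b\text{ s.t. }(x_a,x_b)\in R^{s}_a\text{ and }(x_b,x_c)\in R^{s}_b\}$ and $R^{as}\coloneqq\{(x_c,x_a)\,:\,\exists\,x_b\in X_b\text{ s.t. }(x_c,x_b)\in R^{as}_b\text{ and }(x_b,x_a)\in R^{as}_a\}$; these will serve as witnesses for $S_a\preceq_{\mcS,\mcFi_{ab}}S_c$ and $S_c\preceq_{\mcAS,\mcFi_{ab}}S_a$, respectively.

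For the simulation half I verify Def.~\ref{def:Fsim}. Initial states chain directly: any $x_a\in X_{a0}$ is paired via $R^{s}_a$ with some $x_b\in X_{b0}$, which is paired via $R^{s}_b$ with some $x_c\in X_{c0}$, so $(x_a,x_c)\in R^{s}$. For the transition condition, given $(x_a,x_c)\in R^{s}$ with intermediate witness $x_b$ and $u_a\in U(x_a)$, the first simulation yields $u_b=\mcF_a(u_a)\in U(x_b)$ and, for each $x_a\rTo^{u_a}x'_a$, a partner $x_b\rTo^{u_b}x'_b$ with $(x'_a,x'_b)\in R^{s}_a$ and $H_a(\cdot)\in\mcFi_{y_a}(H_b(\cdot))$; feeding this $u_b$ into the second simulation at $(x_b,x_c)$ then produces $u_c=\mcF_b(u_b)\in U(x_c)$, a partner $x_c\rTo^{u_c}x'_c$ with $(x'_b,x'_c)\in R^{s}_b$, and $H_b(\cdot)\in\mcFi_{y_b}(H_c(\cdot))$. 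Two closure observations seal the argument: $u_c=\mcF_b\circ\mcF_a(u_a)=\mcF_{ab}(u_a)$ matches the composite forward transducer, and the output containment $H_a(\cdot)\in\mcFi_{y_a}(H_b(\cdot))\subseteq\mcFi_{y_a}\bigl(\mcFi_{y_b}(H_c(\cdot))\bigr)=\mcFi_{y_{ab}}(H_c(\cdot))$ is exactly $H_a(\cdot)\sim_{\mcFi_{y_{ab}}}H_c(\cdot)$.

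For $S_c\preceq_{\mcAS,\mcFi_{ab}}S_a$ the construction is dual. Given $(x_c,x_a)\in R^{as}$ with intermediate witness $x_b$ and any $u_c\in U(x_c)$, apply the $\mcFi_b$-alternating simulation first to obtain some $u_b\in\mcFi_b(u_c)\cap U(x_b)$ together with the assignment that every $u_b$-successor $x'_b$ of $x_b$ is matched by some $u_c$-successor $x'_c$ of $x_c$ with $(x'_c,x'_b)\in R^{as}_b$; feed this $u_b$ into the $\mcFi_a$-alternating simulation at $(x_b,x_a)$ to obtain $u_a\in\mcFi_a(u_b)\cap U(x_a)$ and an assignment covering every $u_a$-successor $x'_a$ of $x_a$ by some $u_b$-successor $x'_b$ with $(x'_b,x'_a)\in R^{as}_a$. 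Composing the two coverage assignments yields, for every $u_a$-successor $x'_a$ of $x_a$, a $u_b$-successor $x'_b$ of $x_b$ and then a $u_c$-successor $x'_c$ of $x_c$ with $(x'_c,x'_b)\in R^{as}_b$ and $(x'_b,x'_a)\in R^{as}_a$, hence $(x'_c,x'_a)\in R^{as}$. Since $u_a\in\mcFi_a(u_b)$ with $u_b\in\mcFi_b(u_c)$ we obtain $u_a\in\mcFi_a\bigl(\mcFi_b(u_c)\bigr)=\mcFi_{ab}(u_c)$, as required, and the output containment follows by the same chain as in the simulation direction.

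The main obstacle I anticipate is the careful bookkeeping of quantifier alternation and of the direction of the transducer action in the alternating case, in particular ensuring that the intermediate input $u_b$ selected by the $\mcFi_b$-alternating simulation is exactly the one fed into the $\mcFi_a$-alternating simulation so that the two ``for every successor'' quantifiers compose correctly into a single coverage of $u_a$-successors by $u_c$-successors. A conceptually cleaner alternative, should the direct bookkeeping prove awkward, is to invoke Prop.~\ref{prop:simsFsims} to reformulate each of the four hypotheses as a classical (alternating) simulation against a transduced system, establish the identity $\mcFi_a\bigl(\mcFi_b(S_c)\bigr)=\mcFi_{ab}(S_c)$ by unfolding Def.~\ref{def:F_inv-sys} with $\mcFi_{ab}=\mcFi_a\circ\mcFi_b$, propagate the classical simulations through $\mcFi_a$ using a routine inverse-transducer analogue of Prop.~\ref{prop:F_sims}, and conclude by classical transitivity followed by a final application of Prop.~\ref{prop:simsFsims}.
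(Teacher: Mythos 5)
Your primary argument---direct relational composition of the witness relations, with explicit verification of Definitions~\ref{def:Fsim} and \ref{def:Faltsim}---is correct, and it is a genuinely different route from the paper. The paper instead reduces the claim to the transitivity of \emph{classical} (alternating) simulation: it applies Proposition~\ref{prop:simsFsims} to rewrite each hypothesis as a classical simulation against a transduced system, uses (an inverse-transducer version of) Proposition~\ref{prop:F_sims} to obtain $\mcFi_a(S_b)\preceq_\mcS\mcFi_a\circ\mcFi_b(S_c)$, chains by ordinary transitivity, and converts back via Proposition~\ref{prop:simsFsims}. Your closing sketch of the ``conceptually cleaner alternative'' is essentially that proof, and you correctly flag the one point the paper glosses over: Proposition~\ref{prop:F_sims} as stated is for forward transducers $\mcF$, so invoking it on $\mcFi_a$ requires either reinterpreting $\mcFi_a$ as a transducer into the power set $2^{U_a}\times 2^{Y_a}$ or proving the inverse-transducer analogue. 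The trade-off is as you'd expect: your direct proof is self-contained and exposes exactly where the quantifier alternation and the set-valued output containment $\mcFi_{y_a}(H_b)\subseteq\mcFi_{y_{ab}}(H_c)$ are used (where the union convention for composing set-valued maps, $\mcFi_{y_{ab}}(y_c)=\bigcup_{y_b\in\mcFi_{y_b}(y_c)}\mcFi_{y_a}(y_b)$, must be made explicit), while the paper's modular proof is shorter but leans on the earlier propositions and on Tabuada's classical transitivity result.
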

\begin{proof}
    Provided in the Appendix.
\end{proof}

In what follows, let $\mathcal{H}$ and $\mathcal{J}$ be two signal spaces related through the transducer $\mcF\,:\,\mathcal{H}\rightarrow\mathcal{J}$, and denote by $\mathcal{B}(\cdot)$ the space of systems defined over the corresponding signal space (see Fig. \ref{fig:transducers}).
\begin{proposition}[Transducers commutative diagram]
\label{prop:trans_cd}
    The diagram in Fig.~\ref{fig:transducers} commutes up to (alternating) simulation, i.e. given systems $S_H\in\mc{B}(\mc{H})$ and $S_J\in\mc{B}(\mc{J})$, a transducer $\mc{F}:\mc{H}\to\mc{J}$, and a corresponding proper inverse $\mc{F}^{-1}:\mc{J}\to 2^\mc{H}$, we have:
    $$
    S_H\preceq_\mcS \mc{F}^{-1}\circ \mc{F}(S_H)\preceq_\mcAS S_H
    $$
    $$
    S_J\preceq_\mcS \mc{F}\circ \mc{F}^{-1}(S_J)\preceq_\mcAS S_J,
    $$
    with the \emph{trivial relation} $R_\mbf{id}\subseteq X\times X$, $xR_\mbf{id}x'\Leftrightarrow x=x'$.
\end{proposition}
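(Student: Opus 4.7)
All four systems in the diagram share the same underlying state set and the same set of initial states (both transduction constructions in Definitions \ref{def:F-sys} and \ref{def:F_inv-sys} leave $X$ and $X_0$ untouched), so the initial-state condition (i) of Definitions \ref{def:Fsim} and \ref{def:Faltsim} is trivially satisfied by $R_\mbf{id}$. The entire proof therefore reduces to verifying the transition/output conditions, and these reduce mechanically to the two proper-inverse identities $u\in\mc{F}_u^{-1}(\mc{F}_u(u))$ and $y\in\mc{F}_y(\mc{F}_y^{-1}(y))$ (and their $y$/$u$ swaps).

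For $S_H \preceq_{\mcS} \mc{F}^{-1}\circ\mc{F}(S_H)$, given any transition $x \rTo^{u} x'$ in $S_H$, I pair it with the input $\iota(u)=\{u\}$ in the round-trip system. Unfolding Definition \ref{def:F-sys} first gives $(x,\mc{F}_u(u),x')\in\rTo_{f}$ inside $\mc{F}(S_H)$, using $u\in\mc{F}_u^{-1}(\mc{F}_u(u))$ as the existential witness. Then Definition \ref{def:F_inv-sys} applied to the set $\{u\}$, whose image is the singleton $\{\mc{F}_u(u)\}$, yields $(x,\{u\},x')\in\rTo_{ff}$ after a trivial universal check. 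The output compatibility is handled identically: the output of the round-trip system on this transition is $\mc{F}_y^{-1}(\mc{F}_y(H(x\rTo^u x')))$, which contains the original output $H(x\rTo^u x')$ by the proper-inverse property, so the two outputs are related through $\sim_{\iota^{-1}}$ (i.e., singleton membership).

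For $\mc{F}^{-1}\circ\mc{F}(S_H) \preceq_{\mcAS} S_H$, the alternating direction, I reason dually. Given $(x,x)\in R_\mbf{id}$ and any enabled input $U'\in 2^{U_H}$ at $x$ in the round-trip system, I pick any representative $u\in U'$ as the matching input in $S_H$. The transition relation $\rTo_{ff}$ requires that for every element in $\mc{F}_u(U')$ the corresponding $\mc{F}(S_H)$-transition exists, which in turn requires (again via the proper-inverse property and Definition \ref{def:F-sys}) that some $u\in U'$ yield a genuine $S_H$-transition, giving a nonempty $\textbf{Post}_u(x)$ in $S_H$. For the adversarial step, every $x'\in\textbf{Post}_u(x)$ in $S_H$ is also reachable in the round-trip system via $\{u\}\subseteq U'$, and the output relation again follows from $H(x\rTo^u x')\in\mc{F}_y^{-1}(\mc{F}_y(H(x\rTo^u x')))$. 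The second pair, $S_J \preceq_{\mcS} \mc{F}\circ\mc{F}^{-1}(S_J) \preceq_{\mcAS} S_J$, follows by an exact dual argument swapping the roles of $\mc{F}$ and $\mc{F}^{-1}$ and invoking $y\in\mc{F}(\mc{F}^{-1}(y))$ in place of $u\in\mc{F}^{-1}(\mc{F}(u))$.

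The only nontrivial obstacle is bookkeeping: Definitions \ref{def:F-sys} and \ref{def:F_inv-sys} are asymmetric, since the $\mc{F}^{-1}$-transduced system lifts inputs and outputs to power sets, whereas the $\mc{F}$-transduced system does not. One must therefore consistently interpret a symbol $u$ of $S_H$ as the singleton $\{u\}$ via the inclusion $\iota$ shown in Fig.~\ref{fig:transducers} when comparing to the doubly-transduced system, and likewise for outputs. Once this embedding is fixed, every verification collapses to a single invocation of the proper-inverse property, so the proposition follows.
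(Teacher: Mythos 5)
Your overall architecture matches the paper's proof closely: you use the trivial relation $R_\mathbf{id}$, note that condition (i) of Definitions~\ref{def:Fsim} and \ref{def:Faltsim} is automatic because the transduced constructions leave $X$ and $X_0$ untouched, and then reduce condition (ii) to the two proper-inverse identities. For the simulation direction ($S_H \preceq_\mcS \mc{F}^{-1}\circ\mc{F}(S_H)$), your route through the singleton embedding $u \mapsto \{u\}$ and the membership $u\in\mc{F}^{-1}_u(\mc{F}_u(u))$ is essentially the paper's argument verbatim.

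The alternating-simulation direction is where you diverge, and there is a genuine gap. You pick an arbitrary representative $u\in U'$ and claim that ``the transition relation $\rTo_{ff}$ requires \ldots that some $u\in U'$ yield a genuine $S_H$-transition.'' That inference is not valid. Unfolding Definition~\ref{def:F_inv-sys} for $\mc{F}^{-1}\circ\mc{F}(S_H)$, the enabled input $U'\in U_f(x)$ means that for every $u_m\in\mc{F}_u(U')$ there exists \emph{some} $u''\in\mc{F}^{-1}_u(u_m)$ with $(x,u'',x')\in\rTo_H$. Nothing forces that witness $u''$ to lie in $U'$, and nothing forces the representative $u\in U'$ you selected to be enabled in $S_H$ at all. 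The paper instead selects the witness $u'\in\mc{F}^{-1}_u\circ\mc{F}_u(u)$ extracted from this existential quantifier as the matching input in $S_H$, which is enabled by construction; your proof quietly substitutes $u$ for $u'$. The subsequent adversarial step inherits the same problem: you assert that every $x'\in\textbf{Post}_u(x)$ in $S_H$ is reachable ``via $\{u\}\subseteq U'$'' in the round-trip system, but $\textbf{Post}^{f_2}_{U'}(x)$ is an \emph{intersection} over $u_m\in\mc{F}_u(U')$, hence contained in (not containing) $\textbf{Post}^{f_2}_{\{u\}}(x)$; set containment runs the wrong way for the conclusion you need. To close the gap you should, as the paper does, work with the witness input $u'$, and argue (as the paper sketches via its observation that the successor sets agree across elements of $u_f$) that its $S_H$-successors indeed land in $\textbf{Post}^{f_2}_{U'}(x)$.
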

\begin{proof}
    Provided in the Appendix.
\end{proof}


In compositional reasoning, we wish to compose more than two systems towards a higher goal. This motivates notions of composition such that composing two open systems produces another open system. We introduce one such general notion in what follows:
\begin{definition}[Open sequential feedback composition]
\label{def:compose}
\phantom{a}\newline Given two systems $S_1=\left(X_1,X_{01},U_1,\rTo_1,Y_1,H_1\right)$ and \newline $S_2=\left(X_2,X_{02},U_2,\rTo_2,Y_2,H_2\right)$, 
and a pair $F=(F_i,F_o)$ of \emph{interconnection maps} $F_i: Y_1 \times Y_2 \times U_c \to U_1\times U_2$, with $F_i=F_{i1}\times F_{i2}$, $F_{i1}:Y_1 \times Y_2\to U_1$, $F_{i2}:Y_1\times U_c\to U_2$, and $F_o: Y_1\times Y_2 \to Y_c$ the open sequential feedback composition of $S_1$ with $S_2$, denoted $S_1\Vert_F S_2$ is given by the system $S_c=\left(X_c,X_{c0},U_c,\rTo_c,Y_c,H_c\right)$ with:
\begin{itemize}
    \item $X_c=X_1\times X_2\times Y_1$
    \item $X_{c0}=X_{01}\times X_{02}\times Y_1$
    \item $\rTo_c = \lbrace ((x_1,x_2,y_1),u_c,(x_1',x_2',y_1'))\in X_c\times U_c\times X_c\,|\,\newline 
    (x_1,u_1,x_1')\in\rTo_1 \wedge (x_2,u_2,x_2')\in\rTo_2 \wedge\newline 
    y_1'=H_1((x_1,u_1,x_1'))\newline
    \text{with}\; u_1=F_{i1}(y_1, H_2(x_2,u_2,x_2')), u_2=F_{i2}(y_1,u_c)
    \rbrace$
    \item $H_c=F_o$
\end{itemize}
\end{definition}
Note that open composition is (in general) not commutative, i.e. $S_1\Vert_F S_2 \neq S_2\Vert_F S_1$, and, in general (depending on $F$), one of the two compositions may not even be well defined. This type of composition is akin to a classical feedback plus feedforward control structure, in which $u\in U_c$ takes the role of a feedforward action, and $S_1$ and $S_2$ act as plant and controller, respectively.
Classical (closed) feedback composition, which we denote by simply $S_1\Vert S_2$, is a particular case when $F=(\mbf{fl}, \pi_1\times \pi_2)$, where $\mbf{fl}(y1,y2,u)=(y_2,y_1)$. Similarly, open compositions also allow to represent sequential compositions of systems with $F_{i1}=\mbf{id}$, $F_{i2}=\pi_2$, $F_o=\pi_1$. As shown in the following result, combining such open compositions with (alternating) $\mcFi$-simulation relations allows to transfer controllers between systems defined over different signal spaces.

\begin{proposition}[Controller transference]
\label{prop:control}
Let the system $S_b=(X_b,X_{b0}, \mcF_u(U_a), \rTo_b, \mcF_y(Y_a), H_b)$ be an abstraction of system $S_a=(X_a,X_{a0}, U_a, \rTo_a, Y_a, H_a)$ in the sense that:
\begin{equation}\label{eq:controller-transfer-abstraction}
    S_a \preceq_{\mcSFi} S_b \preceq_{\mcASFi} S_a,
\end{equation}
with $\mc{F}=\mc{F}_u\times\mc{F}_y$,
$\mc{F}^{-1}=\mc{F}^{-1}_u\times\mc{F}^{-1}_y$ proper,
$\mc{F}_{y}:Y_a\to Y_b$, $\mc{F}^{-1}_{y}:Y_b\to 2^{Y_a}$, $\mc{F}_{u}:U_a\to {U_b}$, $\mc{F}^{-1}_{u}:U_b\to 2^{U_a}$. Consider a third system $S_c=(X_c,X_{c0}, U_c, \rTo_c, Y_c, H_c)$ composable with $S_b$ via an interconnection mapping $F$ .
The following holds:
\begin{equation*}
    \mc{F}(S_a)\Vert_{F} S_c \preceq_\mcS \mcF\circ\mcFi(S_b \Vert_{F} S_c) \preceq_\mcAS \mc{F}(S_a)\Vert_{F} S_c.
\end{equation*}
\end{proposition}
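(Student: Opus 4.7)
The plan is to chain together three facts: the Galois-like equivalence of $\mcFi$-simulations with ordinary simulations after transduction (Prop.~\ref{prop:simsFsims}), the preservation of simulation and alternating simulation under transduction (Prop.~\ref{prop:F_sims}), and a compositional-preservation lemma for the open sequential feedback operator $\Vert_F$, which, although not explicitly stated earlier in the excerpt, would be formulated and proved as an auxiliary result.

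From the hypothesis \eqref{eq:controller-transfer-abstraction}, Prop.~\ref{prop:simsFsims} rewrites $S_a \preceq_{\mcSFi} S_b \preceq_{\mcASFi} S_a$ as $S_a \preceq_\mcS \mcFi(S_b)$ and $\mcFi(S_b) \preceq_\mcAS S_a$. Applying Prop.~\ref{prop:F_sims} to each of these, using the transducer $\mcF$, produces
\[
\mcF(S_a) \preceq_\mcS \mcF\circ\mcFi(S_b) \preceq_\mcAS \mcF(S_a).
\]
Both $\mcF(S_a)$ and $\mcF\circ\mcFi(S_b)$ now share the input/output alphabets of $S_b$, so both can be composed with $S_c$ using the same interconnection map $F$.

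The auxiliary lemma I would then invoke reads: if $P \preceq_\mcS P'$ (resp.\ $P \preceq_\mcAS P'$) and both systems are composable with $S_c$ through $F$, then $P \Vert_F S_c \preceq_\mcS P' \Vert_F S_c$ (resp.\ $\preceq_\mcAS$). Given a witness relation $R \subseteq X_P \times X_{P'}$, I would lift it to
\[
\tilde R = \bigl\{\bigl((x_P,x_c,y),(x_{P'},x_c,y)\bigr) \,\big|\, (x_P,x_{P'}) \in R \bigr\},
\]
and verify the simulation conditions directly from Def.~\ref{def:compose}: for every transition in $P \Vert_F S_c$ driven by an external input $u_c$, the interconnection $F$ produces internal inputs; the relation $R$ supplies a matching input to $P'$ and a matching successor, while $S_c$ evolves identically on both sides because $x_c$ and $F$ are shared. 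Applying this lemma to the chain displayed above gives $\mcF(S_a)\Vert_F S_c \preceq_\mcS \mcF\circ\mcFi(S_b)\Vert_F S_c \preceq_\mcAS \mcF(S_a)\Vert_F S_c$, and the identification $\mcF\circ\mcFi(S_b)\Vert_F S_c = \mcF\circ\mcFi(S_b \Vert_F S_c)$ closes the argument, since $\mcF\circ\mcFi$ only modifies the $U_b,Y_b$-interface of $S_b$, which becomes internal under composition and leaves the external $U_c,Y_c$-interface untouched.

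The hard part is the compositional lemma, specifically threading the output/input compatibility through the interconnection map $F$. The input $F_{i2}(y_1, u_c)$ fed to $S_c$ is computed from the output $y_1$ stored in the composite state, whose evolution is coupled to the output maps $H_P$ and $H_{P'}$ only at the transition level via $\sim_{\mcFi_y}$; so the lifted relation $\tilde R$ must encode enough of this output compatibility to ensure that $S_c$ receives a coherent input on both sides. In the alternating case, the quantifier structure is especially delicate: for each input chosen in the concrete composite one must produce a matching input in the abstract composite such that every nondeterministic successor of the matched input admits a witness successor on the concrete side, all while keeping the $S_c$-state synchronized. Once this is verified, the remainder of the proof follows mechanically.
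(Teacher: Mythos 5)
Your approach matches the paper's: both begin by applying Propositions~\ref{prop:simsFsims} and~\ref{prop:F_sims} to obtain $\mcF(S_a) \preceq_\mcS \mcF\circ\mcFi(S_b) \preceq_\mcAS \mcF(S_a)$, and both then lift the witnessing relation $R_{ab}$ (resp.\ $R_{ba}$) to the composite state space by pinning the $S_c$-state and the stored output and matching plant states through $R_{ab}$, exactly as you describe. The paper simply carries out the compositional step inline rather than as a standalone lemma, and it builds the lifted relation directly against $\mcF\circ\mcFi(S_b\Vert_F S_c)$, which sidesteps the slight impropriety in your closing ``identification'' of $\mcF\circ\mcFi(S_b)\Vert_F S_c$ with $\mcF\circ\mcFi(S_b\Vert_F S_c)$: these are not literally equal transition systems (the stored output lives in $Y_b$ on one side and in $2^{Y_b}$ on the other, and $F$ is typed against $S_b$'s alphabets, not their power sets), so the direct construction is the cleaner way to make your argument precise.
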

\begin{proof}
Applying Propositions~\ref{prop:F_sims} and~\ref{prop:simsFsims} to \eqref{eq:controller-transfer-abstraction} results in: $$\mcF(S_a) \preceq_{\mcS} \mcF\circ\mcFi(S_b) \preceq_{\mcAS} \mcF(S_a).$$
Denote by $R_{ab}$ the $\mcFi$-simulation relation from $S_a$ to $S_b$ and $R_{ba}$ the $\mcFi$-alternating simulation relation from $S_b$ to $S_a$.

Let $S_{fc}:=\mc{F}(S_a)\Vert_{F} S_c$, $S_f:=\mc{F}(S_a)$, and $S_{bc}:=\mcF\circ\mcFi(S_b \Vert_{F} S_c)$.
Let $x_{fc,i}$ denote the entries of the state: $x_{fc,1}$, the state of the \emph{plant}, $x_{fc,2}$ the state of the \emph{controller}, and $x_{fc,3}$ the stored output. 
From Definition~\ref{def:compose} we have that $x_{fc}\rTo{u_{fc}}_{fc}x_{fc}'$ if and only if:
$x_{fc,1}\rTo{u_{f}}_{f}x_{fc,1}'$, $x_{fc,2}\rTo{u_{c}}_{c}x_{fc,2}'$,  $x_{fc,3}'=H_f(x_{fc,1}\rTo{u_{f}}_{f}x_{fc,1}')$, with $u_a=F_{i1}\circ H_c(x_{fc,2}\rTo{u_{c}}_{c}x_{fc,2}')$ and 
$u_c=F_{i2}(x_{fc,3}, u_{fc})$.
Let $x_{fc}Rx_{bc}$ if and only if $x_{fc,1}R_{ab}x_{bc,1}$, $x_{fc,2}=x_{bc,2}$, and $x_{fc,3}=x_{bc,3}$.
For every $x_{fc}Rx_{bc}$, $x_{fc}\rTo{u_{fc}}_{fc}x_{fc}'$ we can select $u_{bc}=u_{fc}$, leading to $u_c'=F_{i2}(x_{bc,3}, u_{bc})=F_{i2}(x_{fc,3}, u_{fc})=u_c$ and $$u_b=F_{i1}\circ H_c(x_{bc,2}\rTo{u_c'}x_{bc,2}')=F_{i1}\circ H_c(x_{fc,2}\rTo{u_c}x_{fc,2}')=u_a.$$
As $x_{fc,1}R_{ab}x_{bc,1}$ there exists some $x_{bc,1}\rTo{u_b}x_{bc,1}'$ with $x_{fc,1}'R_{ab}x_{bc,1}'$. $R_{ab}$ is additionally also a simulation relation from $\mcF(S_a)$ to $\mcF\circ\mcFi(S_b)$, and thus $H_{fc}(x_{fc}\rTo{u_{fc}}_f x_{fc}')=$ 
$$(\mcF\circ H_a(x_{fc,1}\rTo_ax_{fc,1}'), H_c(x_{fc,2}\rTo_cx_{fc,2}'))$$
$$\in (\mcF\circ\mcFi\circ H_b(x_{bc,1}\rTo_b x_{bc,1}'), H_c(x_{bc,2}\rTo_c x_{bc,2}').$$
Showing that $R$ is a simulation relation.
The alternating simulation relation follows analogously with the relation: $R'=R^{-1}$, i.e., $x_{bc}R'x_{fc}$ if and only if $x_{bc,1}R_{ba}x_{fc,1}$, $x_{fc,2}=x_{bc,2}$, and $x_{fc,3}=x_{bc,3}$; and employing $\mcF\circ\mcFi(S_b) \preceq_{\mcAS} \mcF(S_a)$.
\end{proof}
\begin{remark}
The transducer $\mcF\circ\mcF^{-1}$ in the sandwich relation enlarges the sets of behaviours of $S_b$ to cover for the case of transducers for which $\mcF\circ\mcF^{-1}\neq\mbf{id}$, as it is the case of e.g., $\mcS_T\circ\mcS^{-1}_{\delta-T}$. 
\end{remark}

\begin{figure}
\begin{tikzpicture}[auto, node distance=1cm]
\node [rectangle, draw, fill=blue!20, text width=1.5cm, minimum height=1.6cm, text centered] (central) {$S^i\Vert_{F^i}\Pi^i$};

    \node [rectangle, draw, fill=green!20, left=of central, text width=0.5cm, minimum height=1.2cm, text centered] (input) {$\mcFi_u$};
    
    \node [rectangle, draw, fill=green!20, right=of central, text width=0.5cm, minimum height=1.2cm, text centered] (output) {$\mcF_y$};
    
    \node [fit=(input)(central)(output), draw, dashed, inner sep=10pt] (dashedbox) {};
    \node [above=0.15cm of dashedbox, text width=4cm, text centered] (boxLabel) {$\mcF(S^i\Vert_{F^i}\Pi^i)$};

    \draw [->] (input.east) -- (central.west) node[midway, above] {$u^i$};
    \draw [->] (central.east) -- (output.west) node[midway, above] {$y^i$};
    
    \node [left=0.5cm of input] (externalInput) {$\hat{u}^{i+1}$};
    \draw [->] (externalInput.east) -- (input.west);
    
    \node [right=0.5cm of output] (externalOutput) {$\hat{y}^{i+1}$};
    \draw [->] (output.east) -- (externalOutput.west);
\end{tikzpicture}
\caption{Illustration of a layer abstracted for the layer above.\\ 
For $\mcF_u:\mc{U}^i\to\mc{U}^{i+1}$, $\hat{u}^{i+1}\in\mc{U}^{i+1}$, $u^i\subseteq 2^{\mc{U}^i}$,$\mcF_y:\mc{Y}^i\to\mc{Y}^{i+1}$, $\hat{y}^{i}\in\mc{Y}^{i}$, $\hat{y}^{i+1}\in\mc{Y}^{i+1}$.}
\end{figure}

\section{Structured Hierarchical Control}

Hierarchical control, in the form of layered control architectures, is a peculiar form of compositional control systems. Different from \textit{horizontal} compositions in which different systems operate at the same level of abstraction, often acting on different parts of a control plant, e.g., distributed or decentralized controllers, \textit{layered}, also called \textit{hierarchical} or \textit{vertical}, compositions ultimately operate on the same control plant. While the elements of a horizontal composition usually operate at the same abstraction level, in layered architectures subsystems operate (on the same system) at different levels of abstraction. 

To make this idea more concrete, consider a simple plant modeled at the most precise level through ODEs with inputs and outputs $u, y\in\mc{L}_\infty$, and consider an abstract system $S$ capturing the behaviours of such a system $\mc{B}(S)\subset 2^{\mc{L}_\infty \times \mc{L}_\infty}$. 
One may construct a chain of abstractions of these signals by abstracting away at each step fundamental features of the input and output signals by changing the domain and co-domains of these signals. 
The continuous-time signals may be discretized in time, potentially additionally reducing the dimension of the signals, to produce $u_d=\mcS_T(u), y_d=\mcS_T(y) \in \mc{D}_\infty$, thus generating a first level of abstraction. One can conceptualize at this level some system $S_d$ whose behaviours are exactly those time discretized signals $\mc{B}(S_d)\subset 2^{\mc{D}_\infty \times \mc{D}_\infty}$. Next, one may further abstract the signals by, e.g., quantizing (discretizing space) their co-domains. Again, one can conceptualize a system $S_q$ whose behaviours are quantized versions of the behaviours in $\mc{B}(S_d)$, i.e., $u_q=\mc{Q}_P(u), y_q=\mc{Q}_P(y) \in \Sigma^\omega$, $(u_q,y_q)\in \mc{B}(S_q)\subset 2^{\Sigma^\omega \times \Sigma^\omega}$.
%
%
Note that, while this abstraction chain is the most frequent one encountered in practice, and rather natural, it is still arbitrary. One could think of other chains of abstractions in which continuous-time and space signals are first quantized, and later discretized in time, or abstractions in which not all inputs and outputs are quantized or discretized at the same time. As long as the necessary transducers between signal spaces are properly defined, a chain of abstractions can be established.
In a layered architecture, ultimately, the control problem must be specified in terms of the desired behaviour of the concrete system, i.e., the desired $\cB(S)$. This goal is attained by restricting the set of possible behaviours in an orderly fashion starting from more abstract to more precise characteristics exploiting the chain: 
$$\cB(S)\subseteq \cS_T^{-1}(\cB(S_d)),\; \cB(S_d) \subseteq \cQ_P^{-1}(\cB(S_q)).$$
In other words, the goal of a layered architecture is to remove undesired system behaviours following the hierarchy of abstractions. The difficulty in the design of layered architectures lies thus as much in the concrete definition of the layers as in the mapping of the original specification of desired concrete behaviours 
into specifications of desired behaviours at higher abstraction levels. 

We propose now an approach to the design of layered control in which a high-level specification is refined on a layered set of sub-specifications formalized as contracts. 

Our contracts $\mc{C}_i$ always take the form for each layer $i$:
\begin{eqnarray}
\label{hs:contracts}
\notag
A_i:= &M_{i-1} \wedge &\text{(Model abstraction from layer below)} \\\notag
 &P_{i+1} &\text{(Property of signals of layer above)} \\\notag
G_i:= &M_{i}  \wedge &\text{(Model abstraction at this layer)} \\
 &P_i &\text{(Property of signals of this layer)}
\end{eqnarray}
On the highest layer $N$, the guarantees do not include a model any more, as there is no higher layer, i.e., $M_{N}=\top$. At the bottom layer -- the most precise model at our disposal of the system under control -- we have $M_0=S_1$, i.e., an abstraction of the real system.
To formalize these properties we make use of the interconnections and system relations previously laid out. In particular, for the model abstraction properties we employ the template:
\begin{equation}
\label{hs:model_abs}
    M_i:= S_i\Vert_{F_i} \Pi_i \preceq_{\mcS,\mc{F}_i^{-1}} S_{i+1} \preceq_{\mcAS,\mc{F}_i^{-1}} S_i\Vert_{F_i} \Pi_i,
\end{equation}
where $\mcF_i$ is a transducer mapping signals from the signal space of $S_i$ to the signal space of $S_{i+1}$, $\mcFi_i$ is the corresponding proper inverse,
and $F_i$ is the interconnection map for the controller at layer $i$.

\begin{theorem}\label{thm:composed}    
Consider a set of $N$ contracts of the form~\eqref{hs:contracts}, with model abstraction properties following the template~\eqref{hs:model_abs}. 
The composition $\mc{C}_c=\bigotimes \mc{C}_i = (A_c,G_c)$ is a system-wide contract where:
\begin{eqnarray*}
A_c:=&  S_1 &\text{(Ground-truth model)} \\
G_c:=& \bigwedge_{i=1}^N M_{i}\wedge & \text{(Intermediate models)} \\
& \bigwedge_{i=1}^N P_i &\text{(All properties)} 
\end{eqnarray*}
Moreover, if $S_1$ is correct, then $\bigwedge_{i=1}^N M_{i}$ establishes a formal chain of abstractions so that controllers composed recursively satisfy for all $i\in\{1,\dots,N\}$:
$$
\left(\mcF_i(\tilde{S}_{i}) \Vert_{F_{i+1}} \Pi_{i+1} \right)\models \mcF_i\circ\mcFi_i(P_{i+1}),
$$
$$
\tilde{S}_{i+1}=\left(\mc{F}_{i}(\tilde{S}_{i})\Vert_{F_{i+1}} \Pi_{i+1} \right),\;\tilde{S}_{1}=S_1\Vert_{F_1}\Pi_1.
$$
\end{theorem}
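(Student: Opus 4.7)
The plan is to split the argument into two stages. First, I compute $(A_c, G_c)$ by direct algebraic manipulation of the saturated A/G composition formula; second, I prove the chain of property claims by induction on $i$, with Proposition~\ref{prop:control} as the workhorse at each step.

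For the composition, distributing the intersection yields $G_c = \bigcap_i G_i = (\bigcap_i M_i) \cap (\bigcap_i P_i)$, matching the stated guarantee. For the assumption, the saturated formula is $A_c = \bigcap_i A_i \cup \neg G_c$. Writing $\bigcap_i A_i = (\bigcap_i M_{i-1}) \cap (\bigcap_i P_{i+1})$, every $M_j$ with $1 \leq j \leq N-1$ and every $P_j$ with $2 \leq j \leq N$ appearing in $\bigcap_i A_i$ already appears in $G_c$, so modulo $\neg G_c$ these carry no extra information. Under the boundary conventions $M_0 = S_1$, $M_N = \top$, and $P_{N+1} = \top$, only the ground-truth factor $S_1$ survives, giving $A_c = S_1 \cup \neg G_c$, the saturated form of $(S_1, G_c)$.

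For the property claim, the hypothesis that $S_1$ is correct discharges $M_0$, so the composed guarantee $\bigwedge_i M_i$ establishes the sandwich $S_i \Vert_{F_i} \Pi_i \preceq_{\mcS,\mcFi_i} S_{i+1} \preceq_{\mcAS,\mcFi_i} S_i \Vert_{F_i} \Pi_i$ at every layer. I induct on $i$. The base case $i=1$ is immediate: since $\tilde{S}_1 = S_1 \Vert_{F_1} \Pi_1$, applying Proposition~\ref{prop:control} with $S_a = \tilde{S}_1$, $S_b = S_2$, transducer $\mcF_1$, and $S_c = \Pi_2$ interconnected by $F_2$ produces
\[
\tilde{S}_2 \;\preceq_{\mcS}\; \mcF_1 \circ \mcFi_1 \bigl(S_2 \Vert_{F_2} \Pi_2\bigr) \;\preceq_{\mcAS}\; \tilde{S}_2,
\]
and since $S_2 \Vert_{F_2} \Pi_2 \models P_2$ from $G_2$, behavior containment under $\preceq_{\mcS}$ yields $\tilde{S}_2 \models \mcF_1 \circ \mcFi_1(P_2)$.

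The inductive step is the principal technical hurdle. For general $i$, I need $\tilde{S}_i$, rather than $S_i \Vert_{F_i} \Pi_i$, to sit in a sandwich with $S_{i+1}$ so that Proposition~\ref{prop:control} can be reapplied. The plan is to bridge this gap by combining three ingredients: the conclusion of Proposition~\ref{prop:control} from the previous step places $\tilde{S}_i$ in a sandwich with $\mcF_{i-1}\circ\mcFi_{i-1}(S_i \Vert_{F_i} \Pi_i)$; Proposition~\ref{prop:trans_cd} places $S_i \Vert_{F_i} \Pi_i$ itself in a sandwich with the same object; and Proposition~\ref{prop:transitivity} then composes these sandwiches with $M_i$ to yield a sandwich of $\tilde{S}_i$ with $S_{i+1}$ under a composed transducer. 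A further application of Proposition~\ref{prop:control}, together with $S_{i+1} \Vert_{F_{i+1}} \Pi_{i+1} \models P_{i+1}$ from $G_{i+1}$, then gives the claim. The main difficulty lies precisely in the bookkeeping of composed transducers: I expect the proper-inverse property to absorb the accumulated $\mcF \circ \mcFi$ factors, so that the net loosening on $P_{i+1}$ collapses to the single $\mcF_i \circ \mcFi_i$ stated.
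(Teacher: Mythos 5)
Your first half — computing $(A_c,G_c)$ by saturating and cancelling the repeated $M_j$, $P_j$ factors against $\neg G_c$, leaving only $M_0=S_1$ — is correct and matches the paper's (very terse) first sentence. Your base case is also correct: Proposition~\ref{prop:control} applied to $M_1$ with $S_c=\Pi_2$ gives exactly $\tilde{S}_2\preceq_\mcS\mcF_1\circ\mcFi_1(S_2\Vert_{F_2}\Pi_2)\preceq_\mcAS\tilde{S}_2$, and $P_2$ then yields $\tilde{S}_2\models\mcF_1\circ\mcFi_1(P_2)$.

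The gap is in the proposed bridge for the inductive step, and it is genuine. You have two sandwiches pointing into the \emph{same} middle object: $\tilde{S}_i\preceq_\mcS M\preceq_\mcAS\tilde{S}_i$ from the previous Proposition~\ref{prop:control} step, and $S_i\Vert_{F_i}\Pi_i\preceq_\mcS M\preceq_\mcAS S_i\Vert_{F_i}\Pi_i$ from Proposition~\ref{prop:trans_cd}, where $M=\mcF_{i-1}\circ\mcFi_{i-1}(S_i\Vert_{F_i}\Pi_i)$. Proposition~\ref{prop:transitivity} chains sandwiches $A\preceq_\mcS B\preceq_\mcAS A$ and $B\preceq_\mcS C\preceq_\mcAS B$ into $A\preceq_\mcS C\preceq_\mcAS A$; it does not let you combine $A\preceq_\mcS M\preceq_\mcAS A$ and $B\preceq_\mcS M\preceq_\mcAS B$ into a sandwich between $A$ and $B$. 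Two systems each simulating (and being alternatingly simulated by) the same third do not in general stand in any simulation relation to each other, so you never obtain the needed $\tilde{S}_i\preceq_{\mcS,\mcFi_i}S_{i+1}\preceq_{\mcAS,\mcFi_i}\tilde{S}_i$ required to reapply Proposition~\ref{prop:control} with $S_a=\tilde{S}_i$.

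Your closing expectation — that the proper-inverse property "absorbs" the accumulated $\mcF\circ\mcFi$ factors so that the loosening collapses to a single $\mcF_i\circ\mcFi_i$ — is not justified by anything in the paper either: properness gives $x\in\mcFi\circ\mcF(x)$ and $y\in\mcF\circ\mcFi(y)$, not idempotence of $\mcF\circ\mcFi$, so the factors from lower layers do not cancel. What the direct application of Proposition~\ref{prop:control} to each $M_i$ actually yields is a sandwich of $\mcF_i(S_i\Vert_{F_i}\Pi_i)\Vert_{F_{i+1}}\Pi_{i+1}$ with $\mcF_i\circ\mcFi_i(S_{i+1}\Vert_{F_{i+1}}\Pi_{i+1})$ — this is precisely the chain the paper's proof invokes — but that object coincides with $\tilde{S}_{i+1}$ only when $\tilde{S}_i$ equals $S_i\Vert_{F_i}\Pi_i$, which holds for $i=1$ but not thereafter. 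So either the theorem is to be read with the looser accumulated transducer $\mcF_i\circ\mcFi_i\circ\cdots\circ\mcF_1\circ\mcFi_1$, or one needs an explicit argument (not supplied here nor in the paper's one-line proof) showing the enlarged $\tilde{S}_i$ still sits in a $\mcFi_i$-sandwich with $S_{i+1}$; either way, the missing piece should be named and discharged rather than waved at with "I expect."
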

\begin{proof}
   Direct application of contract composition over the equivalent saturated contracts of $\mc{C}_i$ results in $\mc{C}_c$. From the guarantees of this composed contract, direct application of Proposition~\ref{prop:control} leads to the desired result.
\end{proof}
The statement $S\models \mcF\circ\mcFi(P)$ is to be interpreted as: the behaviours of $S$ belong to the set of behaviours that satisfy the property $P$ expanded under $\mcF\circ\mcFi$.

\begin{remark}
Observe that the general theory of contracts, focused on so called \emph{horizontal} compositions, requires the composition between components to be associative and commutative. In the application to hierarchical control, the commutativity requirement needs to be put in context. First, observe that in the open feedback composition we employ interconnections $F_i$ that break commutativity as they determine asymmetrically which of the subsystems gets direct access to the exogenous input. This signal affects only one of the two components directly (the controller).
Fortunately, as we just saw, the order in which components are connected is strictly defined between layers and within layers with systems taking distinctive roles: controller and plants, allowing us to circumvent the commutativity requirements.
\end{remark}

\subsection{A typical 3-layer architecture}\label{sec:typ_layer}

We illustrate here the overall approach on a three layer system with layers determined via sampling and quantization. In Appendix~\ref{app:CLF} another example can be found applying these ideas to a CLF-based controller architecture. 
%
We establish the three layers through the assignment of the transducers $\mcF_1:=\mcS$, $\mcF_2:=\mc{K}=\mc{E}\circ\mc{Q}$, with their corresponding proper inverses ${\mcS_{\delta-T}^{-1}}$, and $\mc{K}^{-1}:=\mc{Q}^{-1}\circ\mc{E}^{-1}$.
This results in layer 1 employing ODEs, layer 2 difference equations, and layer 3 FSMs as models. We employ the contracts as just defined and thus we only need to specify now per layer: (i) the corresponding model $S_i$, (ii) the enforced properties at the layer $P_i$. With that information, at each layer the contract satisfaction is achieved through the design of a controller $\Pi_i$ and its corresponding interconnection maps $F_i$.

\subsubsection{Top layer: Model Representation}

At the top layer of control, the goal is to satisfy the high-level specification, operating on long-term strategic decisions. Thus, one may ignore details of how to make the concrete system produce specific symbols (events on the state-space), and simply focus on the ordering of appearance of such symbols (order of events). 

\begin{example}
    The top layer system is represented by a polygonal decomposition of the position space, shown in Fig. \ref{fig:example}, described by a set of symbols $\Sigma$. It is represented by a discrete transition system with transitions between adjacent cells:
    \begin{itemize}
        \item $X_3=X_{3,0} = \Sigma,U_3=\Sigma \times \Sigma, Y_3 = U_3$
        \item $x_3 \rTo_3^{u_3} x_3': x_3 = u_3(1) \wedge x_3' = u_3(2) \wedge\overline{[x_3]}\cap \overline{[x_3']}\neq \emptyset$
        \item $H_3:(x_3, u_3, x_3') \mapsto u_3$
    \end{itemize}
    The system $S_3$ operates on quantized space and in event-time.
\end{example}

\subsubsection{Middle layer: Model Representation}

The middle layer aims to plan trajectories for the system in continuous space, often using a simplified, discrete-time model to facilitate rapid planning. The planned trajectories are passed to the low layer to track, and need to accomplish transitions commanded by the top layer. This allows the middle layer to ignore the complexities of the full-order, nonlinear, continuous-time dynamics, and the complexities of satisfying the LTL specification, making the planning problem tractable.
\begin{example}
    The middle layer is represented by an exact discretization of a single integrator with bounded jumps in input between time steps. Here, the state is denoted by $(x_m,y_m,v_x,v_y)\in\R^4$, where $(x_m,y_m)$ represents the planar position of the integrator and $(v_x,v_y)$ its planar velocity. The inputs to the integrator are denoted by $(\Delta v_x,\Delta v_y)\in\R^2$, which correspond to the change in velocity between time-steps. As it may take many time steps to accomplish a high level transition, inputs and outputs of the resulting transition system $S_2=(X_2,X_{20},U_2,\rTo_2,Y_2,H_2)$ are sequences of arbitrary but finite length $n$:
    \begin{itemize}
        \item $X_2=\R^4$, $X_{20}=X_2$
        \item $U_2=\{[u[k]]_{k=0}^{n-1}\,:\,u[k]\in\mathcal{U}_2\}$
        \item $(x,u,x')\in\rTo_2\iff$ there exists a sequence $[x[k]]_{k=0}^n$ with $x[k]\in X_2$ such that $x=x[0]$, $x'=x[n]$ and for all $k\in\{0,\dots,n-1\}$: $$x[k+1]=\begin{bmatrix}
            I & TI\\ 0 & I
        \end{bmatrix}x[k] + \begin{bmatrix}
            0 \\ I
        \end{bmatrix}u[k]. $$
        \item $Y_2=\{[x[k]]_{k=0}^n\,:\,x[k]\in X_2\}$
        \item $H_2 = (x,u,x')\mapsto [x[k]]_{k=0}^n$.
    \end{itemize}
    Additionally, let $\mathcal{U}_2 = \{u_2 \in \mathbb{R}^2 : \Vert u_2\|_\infty \leq \Delta v_{max} \}$, i.e., the velocity jumps of the single integrator are bounded. The system $S_2$ operates in a low-dimensional continuous space and discrete-time, with a simple linear model to facilitate rapid planning.
\end{example}

\subsubsection{Low layer: Model Representation}\label{sec:bottomlayer}
The goal of the low layer is to simply guarantee, by properly tracking references from the mid-layer, that at the mid-layer the system can indeed be treated as a linear discrete-time system. The bottom layer uses the full nonlinear dynamics to describe the system.

\begin{example}
    In our example, let $f$ be the dynamics characterizing \eqref{eq:diffdrive} and define system $S_1=(X_1,X_{10},U_1,\rTo_{1},Y_1,H_1)$ as:
    \begin{itemize}
        \item $X_1=\mathcal{X}\subset\R^5$, $X_{10}=X_1$
        \item $U_1 = \{u\,:\,[0,\tau)\rightarrow\mathcal{U}\,|\,u(\cdot)\text{ is piecewise continuous}\}$
        \item $(x,u,x')\in\rTo_{1}\iff$ there exists a continuously differentiable function $\xi_{u,x}\,:\,[0,\tau)\rightarrow X_1$ such that $\xi_{u,x}(0)=x$, $\xi_{u,x}(\tau)=x'$, and $\dot{\xi}_{x,u}(t)=f(\xi_{u,x}(t),u(t))$ for all $t\in[0,\tau)$
        \item $Y_1=\{\xi\,:\,[0,\tau)\rightarrow X_1\,|\,\xi(\cdot)\text{ is continuously differentiable}\}$
        \item $H_1=(x,u,x')\mapsto \xi_{u,x}(0:\tau)$,
    \end{itemize}
    where $\mathcal{X}$ and $\mathcal{U}$ as defined as in \eqref{eq:diffdrive}.
    The system $S_1$ operates in both continuous-time and space, with a high fidelity nonlinear model. It takes inputs and generates outputs which are continuous time signals of arbitrary (nonzero) length.
\end{example}


\subsubsection{Signal Properties} 
The task remains to design controllers for each layer such that the controlled system satisfies its model abstraction. In general, it is difficult or impossible to make guarantees regarding the controlled system's behavior completely independently of the functionality of the layers above and below it. Signal properties introduce a way to provide additional interfacing information which abstracts system behavior between layers, allowing the controllers to be designed and certified independently of the surrounding layers. 
\begin{example}
    Output and input properties are defined for each system to interface the system with the abstraction above it. Note the superscript $c$ refers to aspects of the controlled system at each layer.
    \begin{itemize}
        \item $P_4 = \top$
        \item High Layer Output Property: $P_3:= S_3\Vert \Pi_3 \models \phi$\newline (high layer output satisfies LTL specification). 
        \item Mid Level Output Property: $P_2: Q_p(y_2^c) \in \mathcal{E}_N^{-1}\circ\mc{E}\circ\mc{Q}_p(y_2^c)$ (mid-layer trajectory generates events in at most $N$ times steps.)
        \item Low Level Output Property: $P_1: y_1^c \in \mathcal{S}_{\delta-T}^{-1}\circ\mc{S}_T(y_1^c)$\newline ($\delta$-bounded tracking of linear interpolation). 
    \end{itemize}
\end{example}

\subsubsection{Controller Design}
Finally, all that remains is to design controllers satisfying their corresponding contracts. The hierarchical decomposition of the control task allows each of these designs to be executed independently. 

First, consider the high level controller, which must command transitions of the FSM to enforce some LTL specification (satisfy property $P_3$). 

\begin{example}
    The FSM $S_3\Vert\Pi_3$ satisfying the LTL specification is shown by blue arrows in Fig. \ref{fig:example}. In general, given the finite transition system $S_3$, a policy $\Pi_3$ enforcing $S_3\Vert\Pi_3\models \phi$ can be computed by translating $\phi$ into a B\"{u}chi/Rabin automaton and solving a reachability game over the product automaton \cite[Ch. 5]{Calin}. For this particular example, such a procedure results in a memoryless state-dependent policy that commands transitions of $S_3$ as indicated in Fig. \ref{fig:example}. That the closed-loop system $S_3\Vert\Pi_3$ satisfies the specification in \eqref{eq:LTL-example} can be seen by noting that an accepting word of $\phi$ is $\sigma=(\mathrm{base}\,\mathrm{recharge}\,\mathrm{gather}\,\mathrm{recharge})^{\omega}$, which corresponds to the transitions shown in Fig. \ref{fig:example} from the given initial condition.
    The transitions available to the FSM are limited to those the middle layer can achieve; determining which transitions to allow is considered next. 
\end{example}

Next, consider the construction of the middle layer controller. This controller must accomplish the transitions commanded by the high layer (satisfy its model abstraction $M_2$), while ensuring all planned paths satisfy its signal property $P_2$. This is often done by designing MPC controllers to accomplish transitions between areas of the state space. The set of signals satisfying the commanded high-level transition is translated into a set of constraints for the MPC algorithm, which is solved to find a trajectory of the mid-level system representation that satisfies the high-level action.

\begin{example}
     In this case, the $\Pi_2$ system is trivial and simply passes its input to its output, $H_{\Pi_2}(x_{\Pi_2}, u_{\Pi_2}, x_{\Pi_2}') = u_{\Pi_2}$. The open sequential feedback composition is then given by $S_2^c = S_2 \Vert_{F_2} \Pi_2$:
    \begin{itemize}
        \item $X_2^c: X_2 \times X_{\Pi_2} \times Y_2, U_2^c : u_2^c = \mathcal{Q}_P^{-1} \circ \mathcal{E}_N^{-1}(u_3) = \\ \{[x[k], y[k]]_{k=1}^{n \leq N} : \mathcal{E}\circ \mathcal{Q}_P([x[k], y[k]]) = u_3\} \subset 2^{\mathcal{D}_\infty}, \\ Y_2^c:\begin{bmatrix}
            x_m[k] \\ y_m[k]
        \end{bmatrix} \in (\mathcal{Y}_{2}^c)^{n\leq N}$
        
        \item $F_{2,1}=MPC(y_2, y_{\Pi_2}), F_{2,2} = \mathcal{E} \circ \mathcal{Q}_P(u_2^c), F_{2,o} = \begin{bmatrix}
            I & 0
        \end{bmatrix}y_2$
    \end{itemize}
    To demonstrate that the MPC controller satisfies is associated guarantees, first note that it assumes $P_3$ and $M_1$, and define:
    $$\mathcal{C}_{p_i}=\text{maximal control invariant set contained in } p_i \ominus B_\delta$$
    with $\ominus$ the Minkowsi set difference, and $B_\delta$ a ball of radius $\delta$. For a linear system with polytopic constraints, maximal control invariant sets are easily computable offline using Algorithm 10.2 in \cite{Borrelli}. Transitions available to the FSM are only included if the starting partition's control invariant set is $N$-step backwards reachable set of the goal partition's control invariant set:
    \begin{equation*}
        p_i, (p_i, p_j), p_j \in \rTo_{3,c} \quad \iff \quad \mathcal{C}_{p_i} \subset \mathcal{R}_N(\mathcal{C}_{p_j})
    \end{equation*}
    where $\mathcal{R}_N(\cdot)$ computes the $N$-step backwards reachable set of its argument, computable easily offline using methods in Section 10.1.1 of \cite{Borrelli}. During the backwards reachable set computation, the position space is restricted to $(p_i \cup p_j) \ominus B_\delta$ to accommodate for the tracking error of the low level controller. With this restricted definition of the transitions for the LTL solver, the resulting FSM will only command transitions the MPC can achieve. This requires the additional assumption that the initial condition, at time $t=0$, lies within a control invariant set, which, for our system of interest, is not a strong assumption since the control invariant sets are quite large (this amounts to not considering initial conditions close to a partition boundary with significant velocity toward the boundary). $X_{1,0}$ and $X_{2,0}$ must be modified to be consistent with these assumptions.
    
    Given a desired transition $p_i \to p_j$, the MPC problem, outlined in the Appendix \ref{app:mpc}, steers the state from $\mathcal{C}_{p_i}$ into $\mathcal{C}_{p_j}$ in $N$ steps, in our case $N=30$, while guaranteeing that the state remains at least $\delta$ from the boundary of $p_i \cup p_j$. The MPC problem enforces $\mathcal{C}_{p_j}$ as a terminal constraint, guaranteeing that events occur at most $N$ times steps apart, satisfying $P_2$. Bounded jumps in input are enforced are enforced as a constraint in the MPC formulation. It remains to show that the middle layer satisfies its model abstraction, given by
    \begin{equation*}
        M_2:= S_2\Vert_{F_2} \Pi_2 \preceq_{\mcS,\mc{F}_2^{-1}} S_{3} \preceq_{\mcAS,\mc{F}_2^{-1}} S_2\Vert_{F^2} \Pi_2.
    \end{equation*}
    Consider $\mcF_2$ and its proper inverse defined by
    \begin{equation*}
        \mcF_{2,u} = \mathcal{E} \circ \mathcal{Q}_P, \mcF_{2,y} = \mathcal{E} \circ \mathcal{Q}_P, \mcF_{2,u}^{-1} = \mathcal{Q}_P^{-1} \circ \mathcal{E}_N^{-1}, \mcF_{2,y}^{-1} = \mathcal{Q}_P^{-1} \circ \mathcal{E}_N^{-1}
    \end{equation*}
    First, we demonstrate the simulation relation. Consider the relation $R_{\mcFi_2}$ which relates states if the integrator state lies within the control invariant set associated with the partition:
    $$(x_2^c, x_3) \in R_{\mcFi_2} \iff x_2^c(1) \in \mathcal{C}_{x_3},$$
    where $x_2^c(1)=x_2$ is the component corresponding to the state of $S_2$. The initial condition property is trivially satisfied by choosing initial sets properly, as all initial conditions can be quantized $(x_2^c, Q_P(x_2^c(1))) \in R_{\mcFi_2}$. Next, considering an arbitrary $(x_2^c, x_3) \in R_{\mcFi_2}$, picking any $u_2^c \in U_2^c(x_2^c)$, we need to show that $u_3 = \mcF_{2,u}(u_2^c)$ satisfies 
    \begin{multline*}
        \forall x_2^c \rTo_{2,c} ^ {u_2^c} {x_2^c}' \quad \exists x_3 \rTo_{3} ^ {u_3} x_3' \text{ with } \\({x_2^c}', x_3') \in R_{\mcFi_2} \wedge H_2^c(x_2^c\rTo_{2,c}^{u_2^c}{x_2^c}') \sim_{\mcFi_y} H_3(x_3\rTo_{3}^{u_3}x'_3).
    \end{multline*}
    By the simulation relation, $x_2^c(1) \in \mathcal{C}_{x_3}$. Both systems are deterministic and non-blocking, and thus have unique transitions. The MPC formulation guarantees that the transition accomplished by the controlled system via primary and recursive feasibility, $x_2^c \rTo_{2,c} ^ {u_2^c} {x_2^c}'$, results in a state which is contained in a controlled invariant set, ${x_2^c}' \in \mathcal{C}_{p'}$. By the construction of $U_2^c$, the final state of the MPC lies in the same partition as the input $u_2^c$, and therefore $p' = x_3'$, such that $({x_2^c}', x_3') \in R_{\mcFi_2}$. Moreover, $H_2^c$ returns the sequence of positions and $H_3$ returns the initial and final partitions, which must be identical up to quantization and eventification since the terminal states are in the relation. Therefore $H_2^c(x_2^c\rTo_{2,c}^{u_2^c}{x_2^c}') \sim_{\mcFi_y} H_3(x_3\rTo_{3}^{u_3}x'_3)$, implying the simulation relation holds. The alternating simulation relation follows from a similar analysis, a trivial extension since both systems are deterministic.
\end{example}

    Finally, consider the design of the low layer controller. This controller must render the system at sampling times to look like a linear system, without deviating too far from interpolated trajectories. A typical solution consists of first interpolating signals from the middle layer, and then tracking them, e.g., via feedback linearization. 
    
\begin{example}
    The controller system, $\Pi_1$, is responsible for the interpolation; to do so, it must reconstruct the state signal of the middle layer. This is accomplished by having the components of $\Pi_1$ match exactly $S_2$, with the exception of $H_{\Pi_1}$ which performs the interpolation and is defined as:
    $$H_{\Pi_1}:(x_{\Pi_1}, u_{\Pi_1}, x_{\Pi_1}') \mapsto \mathcal{T}_{\alpha, T}([x_m[k]]_{k=0}^n)$$

    The feedback interconnection performs the feedback linearization:
    \begin{itemize}
        \item $X_1^c: X_1 \times X_{\Pi_1} \times Y_1, U_1^c : u_1^c = \mathcal{S}_{0,\delta - T}^{-1}(u_2)  \subset 2^{\mathcal{L}_\infty}, \\Y_1^c:\begin{bmatrix}
            x_l \\ y_l \\ v_{l,x} \\ v_{l,y}
        \end{bmatrix} \in \mathcal{Y}_{1}^c$
        \item $F_{1,1}=k_{fbl}(y_1, y_{\Pi_1}), F_{1,2} = \mathcal{S}_T(u_{\Pi_1}^c), F_{1,o} = y_1$
    \end{itemize}
    The feedback linearizing controller $k_{fbl}$ and interpolation function $\mathcal{T}_{\alpha, T}$ are defined in Appendix~\ref{app:fbl}. 

    To demonstrate that the feedback linearizing controller satisfies its associated guarantees, first note that it assumes only $P_2$ , with no lower layers. The Lyapunov function derived in Appendix~\ref{app:fbl} certifies perfect tracking of the interpolated signal, i.e. $y_1 = x_1 = x_d = \mathcal{T}_{(\alpha, \tau)}([x_m[k]]_{k=0}^n)$. Appendix~\ref{app:fbl} derives the following tracking error bound:
    \begin{equation} \label{eq:trackbound}
        \Vert x_d - \mathcal{T}_{T}(x_m)\Vert_\infty \leq \frac{3}{16}\alpha T\Delta v_{max}.
    \end{equation}
    Taking $\delta = \frac{3}{32}\alpha T\Delta v_{max}$ guarantees $P_1$, implying the low level satisfies its output signal property. Lastly, it remains to show that the low layer satisfies its model abstraction, given by
    \begin{equation*}
        M_1:= S_1\Vert_{F_1} \Pi_1 \preceq_{\mcS,\mc{F}_1^{-1}} S_{2} \preceq_{\mcAS,\mc{F}_1^{-1}} S_1\Vert_{F_1} \Pi_1.
    \end{equation*}
    Consider $\mcF_1$ and its proper inverse defined with
    \begin{equation*}
        \mcF_{1,u} = \mathcal{S}_T,\text{ } \mcF_{1,y} = \mathcal{S}_T,\text{ } \mcF_{1,u}^{-1} = \mathcal{S}_{0-T}^{-1}, \text{ } \mcF_{1,y}^{-1} = \mathcal{S}_{d-T}^{-1}.
    \end{equation*}
    First, we demonstrate the simulation relation. Consider the simulation relation 
    \begin{multline*}
        (x_1^c, x_2) \in R_{\mcFi_1} \iff \\
        \exists \underline{x}_2 \in X_2 \text{ s.t. } x_2 = \begin{bmatrix}
            I & TI \\ 0 & I 
        \end{bmatrix} \underline{x}_2 + \begin{bmatrix}
            0 \\ I
        \end{bmatrix}\Delta v, \Delta v \in \mathcal{U}_2 \wedge \\
        x_1^c(1) = \mathcal{T}_{\alpha, T}(\underline{x}_2, x_2)(T/2) \wedge x_1^c(2) = x_2,
    \end{multline*}
    which equates states of the controlled system with the discrete time integrator if there exists a previous state of the integrator system which, when interpolated, would generate an interpolation passing through the controlled system state. Given any initial condition $x_{2,0} \in X_{2,0}$, convert the initial condition into system $S_1^c$, i.e. $x_{1,0}^c = (x_{1,0}, x_{2,0}, x_{1,0})$, where $x_{1,0}$ has the same position, velocity, and heading as the single integrator, with $\omega=0$ (exclude from $X_{2,0}$ initial conditions with zero velocity to avoid ambiguity with heading of the double integrator). Then $(x_{1,0}^c, x_{2,0}) \in R_{\mcFi_1}$. Under these conditions, the first condition of the simulation relation is satisfied, as there always exists the possible choice 
    $$\underline{x}_2 = \begin{bmatrix}
        I & -TI \\ 0 & I
    \end{bmatrix}x_2, \Delta v = 0.$$
    For the second condition, consider $(x_1^c, x_2) \in R_{\mcFi_1}$. Given any $u_1^c \in U_1^c(x_1^c)$, and the corresponding $u_2 = \mcF(u_1^c)$, both systems (being deterministic and nonblocking), will undergo transitions: 
    $$x_1^c \rTo_{1,c} ^ {u_1^c} {x_1^c}' \quad x_2 \rTo_{2} ^ {u_2} x_2'.$$
    The $\Pi_1$ system will exactly reconstruct the trajectory of the single integrator (by construction) so that $x_{\Pi_1}' = x_2'$, certifying the last condition to be in the simulation relation. As certified by the tracking Lyapunov function, $S_1$ will exactly track the interpolated signal. Considering then ${x_1^c}'$ and $x_2'$, the point 
    $$\underline{x}_2' = \begin{bmatrix}
         I & -TI \\ 0 & I
    \end{bmatrix}x_2', \quad v = u_2(n),$$
    certifies that $({x_1^c}', x_2') \in R_{\mcFi_1}$ where $u_2(n)$ is the last element of the discretized input signal, as $x_1^c$ is arrived at by interpolating between $\underline{x}_2$ and $x_2$. Lastly, the output relation follows from the bounded tracking error established above, as equation \ref{eq:trackbound} implies that $y_1(t) = x_d(t) \in \mathcal{S}_{\delta-T}^{-1}(y_2)$, such that $H_1^c(x_1^c\rTo_{1,c}^{u_1^c}{x_1^c}') \sim_{\mcFi_{1,y}} H_2(x_2\rTo_{2}^{u_2}x'_2)$. The simulation relation is shown; the alternating simulation follows similarly, a trivial extension since the systems are both deterministic. 
\end{example}

\subsubsection{End-to-end correctness}

Direct application of contract compositions results in the composed contract $\mc{C}=(S_1, \bigwedge_{i=1}^2 M_i\wedge \bigwedge_{i=1}^3 P_i)$.
Furthermore, now we can establish from Theorem~\ref{thm:composed} the guaranteed behaviour after composition of the different controllers:
\begin{eqnarray*}
\mc{K}\left(\mc{S}\left(S_1\Vert_{F^1} \Pi_1 \right) \Vert_{F^2} \Pi_2 \right) \Vert_{F^3} \Pi_3 \models \mc{K}\circ\mc{K}^{-1}(P_3),\\
\mc{S}\left(S_1\Vert_{F^1} \Pi_1 \right) \Vert_{F^2} \Pi_2  \models \mc{S}\circ\mc{S}^{-1}(P_2),\\
S_1\Vert_{F^1} \Pi_1 \models P_1.
\end{eqnarray*}
\begin{example} 
As we just showed, each model satisfies its model abstraction and signal properties, allowing the construction of end to end guarantees. 
The correctness of the hierarchical controller designed can be observed in Fig.~\ref{fig:example}. Note that combining the guarantees at each level of abstraction, one could translate the satisfaction of $\mc{K}\circ\mc{K}^{-1}(\phi)$ down to the satisfaction of some signal temporal logic formula for the continuous-time signals of the concrete system ($S_1$). Due to space to limitations we deffer a detailed discussion to future work.
\end{example}

\begin{figure}[h]
    \centering
    \includegraphics[width=0.45\textwidth]{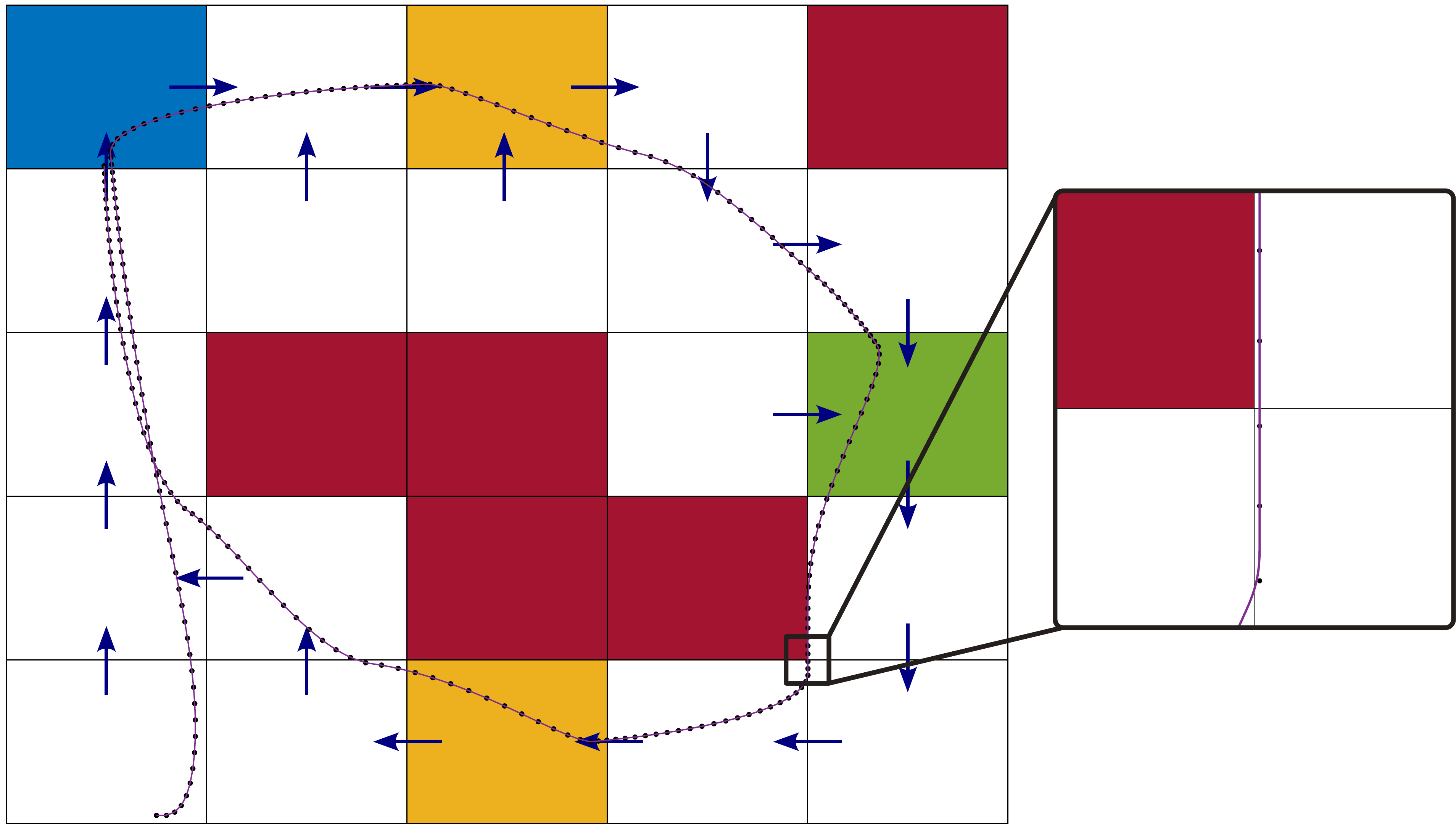}
    \caption{Executing the proposed hierarchical controller. Here, the different colored rectangles represent different propositions from \eqref{eq:LTL-example}, where the blue rectangle corresponds to $\mathrm{base}$, the green rectangle to $\mathrm{gather}$, the yellow rectangles to $\mathrm{recharge}$ and the red rectangles to $\mathrm{danger}$. Arrows encode the FSM's solution to the LTL spec, dots show the MPC plan, and the curve is the path followed by the robot. }
    \label{fig:example}
\end{figure}
\section{Discussion and Conclusions}
We introduced a theory over which to express assume guarantee contracts for hierarchical control architectures. The proposed theory builds on existing work on vertical contracts and simulation relations. In a sense, our proposed system relations for hierarchical control are a an extension of approximate simulation relations~\cite{Girard} to functional approximations. We deliberately used commutative diagrams in our illustrations to hint towards future work on the categorization of layered (control) architectures. The theory, admittedly, can be improved by considering more general (commutative) system interconnections and the combination of both horizontal and vertical contracts. Finally, it would be interesting to explore the potential of the proposed formalization towards architectural (quantitative) optimization, e.g., cross-layer optimizations.

\begin{Appendix}

\section{Proofs}

\begin{proof}[Proof of Proposition 3.6]
We show the equivalence for the simulation relations.
The proof for the alternating version is analogous.
   Let $S_f:=\mcFi(S_b)=(X_b,X_{b0},U_f,\rTo_f, Y_f, H_f)$,
    $H_f:=\mcFi_y\circ H_b$,
    $Y_f= 2^{Y_a}$, and $U_f= 2^{U_a}$.
    Let $R_{ab}\subseteq X_a\times X_b$ be the $\mcFi$-simulation relation from $S_a$ to $S_b$, and $R_{af}\subseteq X_a\times X_f$, the simulation relation $R_{af}=R_{ab}$ from $S_a$ to $S_f$. We focus on condition (ii) of Def.~\ref{def:Fsim}, as condition (i) trivially holds with the given relations both when showing sufficiency and necessity.
    
    Sufficiency: Assume $S_a\preceq_{\mcSFi} S_b$. For every $x_aR_{af}x_f$ and every $x_a\rTo{u_a}_ax_a'$ take $u_f\in\mcFi_u(u_b)$ as the singleton $u_f=\{u\}$, with $u\in U_a$, satisfying $\mcF_u(u)=u_b$.  From Def.~\ref{def:F_inv-sys}, we have $x_b\rTo{u_f}_fx_b'$ 
    if and only if for $u'=\mcF_u(u_f)=u_b$ we have $ x_b\rTo{u_b}_bx_b'$, with $x_b=x_f$ and $x_b'=x_f'$.  Moreover, it follows from $S_a\preceq_{\mcSFi} S_b$ that for every $x_b=x_f$ there exists $x_b\rTo{u_b}_bx_b'$, with $x_a'R_{ab}x_b'$, satisfying $H_a(x_a\rTo{u_a}_ax_a')\in \mcFi_y\circ H_b(x_b\rTo{u_b}_bx_b')$. Observing that $\mcFi_y\circ H_b(x_b\rTo{\mcF_u(u_f)}_bx_b')=H_f(x_f\rTo{u_f}_fx_f')$ completes the proof of sufficiency.

    Necessity: Now let $S_a\preceq_{\mcS} \mcFi(S_b)$. For every $x_aR_{ab}x_b$ and every $x_a\rTo{u_a}_a x_a'$ take $u_b=\mcF_u(u_f)$ for the singleton $\{u_f\}$, with $u_f\in U_a$, such that there exists $ x_f\rTo{u_f}_f x_f'$ with $x_a R_{af}x_f$ and $x_a'R_{af}x_f'$, which exists since $S_a\preceq_{\mcS}\mcFi(S_b)$. From Def.~\ref{def:F_inv-sys}, $x_f\rTo{u_f}_fx_f'$ $\Leftrightarrow$ $ x_f\rTo{u_b}_bx_f'$. Moreover, we have $H_a(x_a\rTo{u_a}_a x_a')\in H_f(x_f\rTo{u_f}_fx_f')=\mcFi_y\circ H_b(x_b\rTo{u_b}_bx_b')$, which completes the proof of necessity.
\end{proof}

\begin{proof}[Proof of Proposition 3.7]
    Since $S_a \preceq_{\mcS,\mcFi_a} S_b$ and $S_b \preceq_{\mcS,\mcFi_b} S_c$, it follows from Proposition \ref{prop:simsFsims} that $S_a\preceq_{\mcS}\mcFi_a(S_b)$ and $S_b\preceq_{\mcS}\mcFi_b(S_c)$. Leveraging Proposition \ref{prop:F_sims}, it then follows that $\mcFi_a(S_b)\preceq_{\mcS}\mcFi_a\circ\mcFi_b(S_c)$, which, after using the transitivity of simulation relations \cite{Paulo}, implies that $S_a\preceq_{\mcS}\mcFi_a\circ\mcFi_b(S_c)=\mcFi_{ab}(S_c)$. From Proposition \ref{prop:simsFsims}, this is true if and only if $S_a\preceq_{\mcS,\mcFi_{ab}}S_c$. Using an analogous argument, one can show that if \eqref{eq:transitivity-conditions} holds, then $S_c \preceq_{\mcAS,\mcFi_{ab}}S_a$, which, after combining with the simulation case, implies that \eqref{eq:transitivity} holds, as desired.
\end{proof}

\begin{proof}[Proof of Proposition 3.8]
    First, we show that $\mc{F}^{-1}$ being proper implies that $S_H\preceq_\mcS \mc{F}^{-1}\circ \mc{F}(S_H)$. Let $S_f:=\mc{F}^{-1}\circ \mc{F}(S_H)=(X_H,X_{H0},U_f,\rTo_f, Y_f, H_f)$,
    $H_f:=\mc{F}_y^{-1}\circ\mc{F}_y\circ H_H$,
    $Y_f\subseteq 2^{Y_H}$, and $U_f\subseteq 2^{U_H}$. For
    $u_f\in U_f$, $(x,u_f,x')\in\rTo_f$ if and only if $\forall\,u_m \in \mc{F}_u(u_f)\subseteq 2^{U_H}$, $\exists u'\in \mc{F}^{-1}_u(u_m)$  such that $(x,u',x')\in\rTo_H$, from direct application of Def. \ref{def:F-sys} and \ref{def:F_inv-sys}.
    Note that both $S_H$ and $S_f$ share the same (initial) state sets $X_H$; thus, by employing the trivial relation $R_\mbf{id}$, condition (i) of Def.~\ref{def:Faltsim} is satisfied.
    From $\mc{F}^{-1}$ being proper we know that $u_H\in \mc{F}^{-1}_u\circ\mc{F}_u(u_H)$, and thus for every $(x,u_H,x')\in\rTo_H$ there exists $(x,\{u_H\},x')\in\rTo_f$. Additionally, $H_H((x,u_H,x'))\in \mc{F}_y^{-1}\circ\mc{F}_y\circ H_H((x,\{u_H\},x'))$, and thus condition (ii) also holds.
    
    Next, we show that if $\mc{F}^{-1}$ is proper, $S_H\succeq_\mcAS \mc{F}^{-1}\circ \mc{F}(S_H)$. 
    We consider again the same trivial relation $R_\mbf{id}$ and system $S_f$, implying we only need to show condition (ii) in Def.~\ref{def:Faltsim} holds. From the definition of $S_f$ we have that 
    for every input $u_f$, $x'\in\mbf{Post}^f_{u_f}(x)$ if and only if for all
    $u_m \in \mc{F}_u(u_f)\subseteq 2^{U_H}$ there exists $u'\in \mc{F}^{-1}_u(u_m)$ such that $(x,u',x')\in\rTo_H$.    
    Moreover, for every $u_f\in U_{f}(x)$ and all
    $u_m,u_m'\in u_f$, we have $\mbf{Post}_{\mc{F}(u_m)}(x)=\mbf{Post}_{\mc{F}(u_m')}(x)$ by Def.~\ref{def:F_inv-sys}. We can thus select as $u_m$ any arbitrary representative $u\in u_f$ without affecting the result. 
    Hence, for every $u_f$ we can select in $S_H$ a $u'\in\mc{F}^{-1}_u(u)$ such that $(x,u',x')\in\rTo_H$, and thus $(x,u_f,x')\in\rTo_f$ by definition. As $H_H((x,u_H,x'))\in \mc{F}_y^{-1}\circ\mc{F}_y\circ H_H((x,u_f,x'))$, condition (ii) in Def.~\ref{def:Faltsim} holds.
    The case $S_J\preceq_\mcS \mc{F}\circ \mc{F}^{-1}(S_J) \preceq_\mcAS S_J$ can be shown analogously, employing instead that $u_J\in \mc{F}_u\circ\mc{F}^{-1}_u(u_J)$.
    
\end{proof}

\section{CLF as a 2-layer architecture}
\label{app:CLF}
We now highlight how the familiar use of Control Lyapunov Functions (CLFs) for the design of controllers \cite{Sontag} can also be reinterpreted as a 2-layer control architecture. In this case, following the structure already presented, all we need to do is specify the models employed at the bottom and top layers, and the transducer and generalized inverse relating their output spaces.
Let us start specifying the models. For the bottom layer it takes the same exact form as $S^1$ in the 3-layer example. The top layer model is given by $S^2=(X^2,X^2_0,U^2,\rTo^2, H^2, Y^2)$ with:
\begin{itemize}
    \item $X^2=X^2_0\subseteq \R^+$, $U^2=U^1$, $Y^2=[0,\tau]\to\mathcal{X}^2$,
    \item $\rTo^2:=\{(x,u,x')\in {X}^2\times U^2\times X^2\,|\, x'=\nu_{x,u}(\tau)$,\newline \phantom{$\rTo^1:=$}$\dot{\nu}_{x,u}(t)=g(\nu_{x,u}(t),u(t)), \nu_{x,u}(0)=x \}$ 
    \item $H^2:= (x,u,x')\mapsto \nu_{x,u}, \dot{\nu}_{x,u}(t)=g(\nu_{x,u}(t),u(t)),\newline
    \phantom{H^2:= (x,u,x')\mapsto \nu_{x,u}, }\nu_{x,u}(0)=x$
\end{itemize}
To define the transducer between systems consider a CLF $V:X^1\to X^2$, where $X^1\subset \R^{n}$ and $X^2\subset \R^+$. The associated transducer between signals is $\mc{V}:\mc{L}^n_\infty\to\mc{L}_\infty$, $\mc{V}:x(t)\mapsto V(x(t)),\,\forall t$.
Define the inverse transducer as $\mc{V}^{-1}:\mc{L}_\infty\to 2^{\mc{L}^n_\infty}$, $\mc{V}^{-1}:v\mapsto \{x\in\mc{L}^n_\infty\,|\, V(x(t))\leq v(t)\,\forall\,t\}$. Now we can define the transducer between systems $\mc{F}=\mc{F}_u\times\mc{F}_y=(\mbf{id},\mc{V})$ and its corresponding inverse $\mc{F}^{-1}=(\mbf{id},\mc{V}^{-1})$.
Finally, we need to define only the interconnection map at the top layer, as the bottom layer does not incorporate any additional controller. Set $F^2=(\mbf{fl},\pi_2)$, in other words, a standard closed feedback composition with output only the output of the controller.
Let $\Pi^2$ be controller for $S^2$, $S^2\Vert_{F^2}\Pi^2$. The contract that the lower level must satisfy is $A^1:=S^1,\;G^1:=S^1 \preceq_{\mcS,\mcFi} S^2 \preceq_{\mcAS,\mcFi} S^1$.
This contract captures the idea of a Lyapunov function providing an abstraction of the system at hand.
Let the higher level take as contract $A^2=G^1,\;G^2:=\mc{B}(S^2\Vert_{F^2}\Pi^2)\subset \mc{B}_0$. The composition of the contracts guarantees that $\mc{F}(S^1)\Vert_{F^2}\Pi_2 \preceq_{\mcS,\mcFi} S^2\Vert_{F^2}\Pi_2$, which in turn means that the behaviours $\mc{B}(\mc{F}(S^1)\Vert_{F^2}\Pi_2)\subset \mc{F}^{-1}(\mc{B}_0)$. Let $\mc{B}_0$ be the set of behaviours such that $\lim_{t\to\infty}v(t)= 0$. From the definition of $\mc{V}^{-1}$ we have that $\mc{B}(\mc{F}(S^1)\Vert_{F^2}\Pi_2)$ is the set of behaviours such that $\lim_{t\to\infty}{V}(x(t)) = 0$, which, assuming $V$ is positive definite and monotonically increasing implies $\lim_{t\to\infty} x(t) = 0$.


\section{Differential Drive Robot Example}
\subsection{MPC Formulation} \label{app:mpc}
Consider a feasible commanded transition $p_i \rTo p_j$, and an initial condition $x_{2,0} \in \mathcal{C}_{p_i}$. Each $p_i$ is a polytope, and can be described $A_i x \leq b_i$. For the given spacial partition, the union of $p_i \cup p_j$ is also a convex polytope, and has representation $A_{i,j} x \leq b_{i,j}$. Control invariant sets are also polytopic, $A_{C_{p_i}} x \leq b_{C_{p_i}}$. Let $Q, Q_f \succeq 0, R \succ 0$ be cost matrices, and $x_{2,f}$ be any desired final position (can be set to center of control invariant set, or center of next important partition in FSM solution - base, gather, ect.). The MPC problem is then
\begin{align}
    u_2^* &= \arg\min_{u_2} \Vert x_2[N] - x_{2,f}\Vert_{Q_f}^2 + \sum_{k=0}^{N-1} \Vert x_2[k] - x_{2,f}\Vert_{Q}^2 + \Vert u_2[k]\Vert_{R}^2 \\
    &s.t. \quad x_2[k+1] = \begin{bmatrix}
        I & TI \\ 0 & I
    \end{bmatrix} x_2[k] + \begin{bmatrix}
        0 \\ I
    \end{bmatrix} u_2[k] \\
    &\phantom{s.t. \quad }A_{i,j} x_2[k] \leq b_{i,j} \quad \forall k = 0,\hdots, N-1-n \\
    &\phantom{s.t. \quad }A_{C_{p_j}} x_2[k] \leq b_{C_{p_j}} \quad \forall k = N - n, \hdots, N\\
    &\phantom{s.t. \quad }-\Delta v_{max} \leq u_2[k] \leq \Delta v_{max} \quad \forall k = 0, \hdots, N-1 \\
    &\phantom{s.t. \quad }- v_{max} \leq \begin{bmatrix}
        0 & I
    \end{bmatrix}x_2[k] \leq v_{max} \quad \forall k = 0, \hdots, N \label{eq:mpc_vel}
\end{align}
where $\|s\|_M^2 = s^\top M s$ and \ref{eq:mpc_vel} enforces a velocity constraint, and $n$ is a counter variable. When MPC receives a new transition, $n=0$; this number is incremented every MPC call to enforce the high level transition occurring in at most $N$ times steps. As $C_{p_j}$ is a control invariant set, this problem is recursively feasible, and the high level command is guaranteed to be initially feasible \cite{Borrelli}. 

\subsection{Interpolation Formulation}\label{app:interp}
The interpolation function $\mathcal{T}_{\alpha, T}([x_m[k]]_{k=0}^n)$ takes in a sequence of states of the middle layer system, and interpolates out a dynamically feasible trajectory of positions to be tracked by feedback linearization, using a fourth order Bezier polynomial \cite{bezier}. The $\alpha \in [0, 1]$ governs the proportion of the signal which will be interpolated: $\alpha=0$ corresponds to no interpolation, and $\alpha=1$ interpolates smoothly between midpoints of the signal, and is used in the example. The interpolation is equivalent in the $x$ and $y$ coordinates, so $x$ is done here for simplicity. Consider two points on the trajectory, $x_m[k], x_m[k+1]$, which have $x$ positions and velocities of $s,v$ respectively. The interpolation between these two points is valid over the time interval $t \in [(k+1/2)T, (k+3/2)T)$:
\begin{align}
    x(t) &= \begin{cases}
        x^1(t) & t \in [(k+0.5)T, (k+1)T] \\
        x^2(t) & t \in [(k+1)T, (k + 1.5)T)
    \end{cases} \\
    x^1(t) &= s_k + v_k (t - kT) + \frac{v_{k+1} - v_k}{\delta T^2} t_{p,1}^3\left(1 - 2\frac{t_{p,1}}{\delta T} \right)\\
    x^2(t) &= s_{k+1} + v_{k+1} (t - (k+1)T) + \frac{v_{k+1} - v_k}{\delta T^2} t_{p,2}^3\left(1 - 2\frac{t_{p,2}}{\delta T}\right)\\
    \delta T &= 2(1 - \alpha / 2)T \\
    t_{p,1} &= t - kT - \alpha T /2 \\
    t_{p,2} &= t - kT + \delta T / 2
\end{align}
This interpolation is iterated along the entire trajectory. Endpoints simply use the linear interpolation. Fig. \ref{fig:example} contains the results of applying this interpolation to the MPC trajectory. 

It is necessary to bound the difference between the linear interpolation in position space and this interpolation, to provide an error bound for tracking. Fixing $\alpha=1$ gives $\delta T = T$, and note that both $x^1(t)$ and $x^2(t)$ contain the linear interpolation as their first terms ($s_k + v_k (t - kT)$ and $s_{k+1} + v_{k+1} (t - (k+1)T)$ respectively). Let $\bar{x}$ be the linear interpolation, and consider the error:
\begin{align*}
    x^1(t) - \bar{x} &= \frac{v_{k+1} - v_k}{T^2} t_{p,1}^3\left(1 - 2\frac{t_{p,1}}{T} \right) \\
    x^2(t) - \bar{x} &= \frac{v_{k+1} - v_k}{T^2} t_{p,2}^3\left(1 - 2\frac{t_{p,2}}{T}\right)
\end{align*}
Both functions are maximized at $\frac{T}{2}$, taking the value
\begin{equation}
    |x^1(t) - \bar{x}| = \frac{3}{32} \alpha |v_{k+1} - v_k|T
\end{equation}
Since the change in velocity is bounded, substituting $|v_{k+1} - v_k| \leq \Delta v_{max}$ gives the desired bound. 

\subsection{Feedback Linearization Implementation} \label{app:fbl}
The feedback controller is defined using Lyapunov Backstepping \cite{Krstic}. Let $x_d(t)$ be the desired position signal (three times differentiable), and define error coordinates:
\begin{equation*}
    e = \begin{bmatrix}
    x_l \\ y_l
\end{bmatrix} - x_d(t)
\end{equation*}
The controller is:
\begin{align}
    u &= \frac{1}{2\gamma \delta} \begin{bmatrix}
        \delta & \gamma \\ \delta & -\gamma
    \end{bmatrix} \begin{bmatrix}
        a_d - \frac{m_c d \omega^2}{m + 2I_w/R^2} \\ \alpha_d + \frac{m_cd \omega v}{I + 2L^2 I_w/ R^2}
    \end{bmatrix} \\
    a_d &= \begin{bmatrix}
        \cos\theta & \sin \theta
    \end{bmatrix} \ddot{e}_d\\
    \alpha_d &= -\frac{1}{2} \sigma (\omega - \omega_d) + \dot{\omega_d} - 2 \begin{bmatrix}
        e^\top & \dot{e}^\top
    \end{bmatrix}P \begin{bmatrix}
        0 \\ 0 \\ -v\sin\theta \\ v \cos \theta
    \end{bmatrix} \\
    \ddot{e}_d &= \ddot{x}_d(t) - K_p e - K_d \dot{e} \\
    \omega_d &= \frac{1}{v}\begin{bmatrix}
        -\sin \theta & \cos \theta
    \end{bmatrix} \ddot{e}_d \\
    \dot{\omega}_d &= \begin{bmatrix}
        \frac{a_d}{v^2} \sin \theta - \frac{\omega}{v}\cos \theta & -\frac{a_d}{v^2} \cos \theta - \frac{\omega}{v} \sin \theta 
    \end{bmatrix} \ddot{e}_d  \\ 
    &\phantom{= \text{ }}+\frac{1}{v}\begin{bmatrix}
        -\sin \theta & \cos \theta
    \end{bmatrix} (\dddot{x}_d(t) - K_p \dot{e} - K_d \ddot{e})
\end{align}
Where $P$ solves the CTLE,
\begin{align*}
    A^TP + PA = -I \\
    A = \begin{bmatrix}
        0 & I \\ -Kp & -K_d
    \end{bmatrix}
\end{align*}
The performance of this controller can be certified by construction of a Lyapunov function,
\begin{equation}
    V(e, \dot{e}, \omega, \omega_d) = \begin{bmatrix}
        e^\top & \dot{e}^\top
    \end{bmatrix} P \begin{bmatrix}
        e \\ \dot{e}
    \end{bmatrix} + \frac{1}{2}(\omega - \omega_d)^2
\end{equation}
Under the proposed controller, the time derivative along solutions of the controlled system are:
\begin{align}
    \dot{V}(e, \dot{e}, \omega, \omega_d) &= -\left\|\begin{bmatrix}
        e \\ \dot{e}
    \end{bmatrix} \right\|^2 - \frac{\sigma}{2}(\omega - \omega_d)^2 \\
    &\leq -\lambda V(e, \dot{e}, \omega, \omega_d)
\end{align}
Where $\lambda = \min\{1/\lambda_{max}(P), \sigma\}$. This bound guarantees local exponential tracking of reference signals, and moreover exact tracking of references if the initial condition has zero error \cite{Khalil}. 
\end{Appendix}

\bibliography{biblio}

\begin{thebibliography}{10}

\bibitem{Benveniste}
A.~Benveniste, B.~Caillaud, D.~Nickovic, R.~Passerone, J.-B. Raclet,
  P.~Reinkemeier, A.~Sangiovanni-Vincentelli, W.~Damm, T.~A. Henzinger, K.~G.
  Larsen, {\em et~al.}, ``Contracts for system design,'' {\em Foundations and
  Trends{\textregistered} in Electronic Design Automation}, vol.~12, no.~2-3,
  pp.~124--400, 2018.

\bibitem{Alkhatib}
M.~Al~Khatib and M.~Zamani, ``Controller synthesis for interconnected systems
  using parametric assume-guarantee contracts,'' in {\em 2020 American Control
  Conference (ACC)}, pp.~5419--5424, 2020.

\bibitem{Saoud}
A.~Saoud, A.~Girard, and L.~Fribourg, ``Assume-guarantee contracts for
  continuous-time systems,'' {\em Automatica}, vol.~134, p.~109910, 2021.

\bibitem{Sharf}
M.~Sharf, B.~Besselink, A.~Molin, Q.~Zhao, and K.~{Henrik Johansson},
  ``Assume/guarantee contracts for dynamical systems: Theory and computational
  tools,'' {\em IFAC-PapersOnLine}, vol.~54, no.~5, pp.~25--30, 2021.
\newblock 7th IFAC Conference on Analysis and Design of Hybrid Systems ADHS
  2021.

\bibitem{Kim}
E.~S. Kim, M.~Arcak, and S.~A. Seshia, ``A small gain theorem for parametric
  assume-guarantee contracts,'' in {\em Proceedings of the 20th International
  Conference on Hybrid Systems: Computation and Control}, HSCC '17, (New York,
  NY, USA), p.~207–216, Association for Computing Machinery, 2017.

\bibitem{Shali}
B.~M. Shali, A.~van~der Schaft, and B.~Besselink, ``Composition of behavioural
  assume–guarantee contracts,'' {\em IEEE Transactions on Automatic Control},
  vol.~68, no.~10, pp.~5991--6006, 2023.

\bibitem{Nuzzo}
P.~Nuzzo and A.~L. Sangiovanni-Vincentelli, ``Hierarchical system design with
  vertical contracts,'' {\em Principles of Modeling: Essays Dedicated to Edward
  A. Lee on the Occasion of His 60th Birthday}, pp.~360--382, 2018.

\bibitem{Ames-Multirate}
U.~Rosolia, A.~Singletary, and A.~D. Ames, ``Unified multirate control: From
  low-level actuation to high-level planning,'' {\em IEEE Transactions on
  Automatic Control}, vol.~67, no.~12, 2022.

\bibitem{Paulo}
P.~Tabuada, {\em Verification and Control of Hybrid Systems: A Symbolic
  Approach}.
\newblock Springer, 2009.

\bibitem{ModelChecking}
C.~Baier and J.~P. Katoen, {\em Principles of Model Checking}.
\newblock MIT Press, 2008.

\bibitem{Incer}
I.~Incer, {\em The algebra of contracts}.
\newblock PhD thesis, Ph. D. thesis, EECS Department, University of California,
  Berkeley, 2022.

\bibitem{Girard}
A.~Girard and G.~J. Pappas, ``Approximation metrics for discrete and continuous
  systems,'' {\em IEEE Transactions on Automatic Control}, vol.~52, no.~5,
  2007.

\bibitem{Calin}
C.~Belta, B.~Yordanov, and E.~A. Gol, {\em Formal Methods for Discrete-Time
  Dynamical Systems}.
\newblock Springer, 2017.

\bibitem{Borrelli}
F.~Borrelli, A.~Bemporad, and M.~Morari, {\em Predictive control for linear and
  hybrid systems}.
\newblock Cambridge University Press, 2017.

\bibitem{Sontag}
E.~D. Sontag, ``A 'universal'construction of artstein's theorem on nonlinear
  stabilization,'' {\em Systems \& Control Letters}, 1989.

\bibitem{bezier}
M.~Kamermans, ``A primer on bezier curves,'' 2020.

\bibitem{Krstic}
M.~Krstic and I.~Kanellakopoulos, {\em Nonlinear and Adaptive Control Design}.
\newblock Wiley, 1995.

\bibitem{Khalil}
H.~Khalil, {\em Nonlinear Systems}.
\newblock Prentice Hall, 2002.

\end{thebibliography}
\bibliographystyle{ieeetr}
\end{document}